\newtheorem{definition}{Definition}
\newtheorem{lemma}{Lemma}
\newtheorem{claim}{Claim}
\newtheorem{theorem}{Theorem}
\newtheorem{hypothesis}{Hypothesis}
\newtheorem{openproblem}{Open Problem}
\newcommand{\poly}{\mathrm{poly}}
\newcommand{\opt}{\mathrm{opt}}
\newcommand{\cost}{\mathrm{cost}}
\newcommand{\tspan}{\mathrm{span}}
\newcommand{\bigmid}{\bigm|}
\newcommand{\Bigmid}{\Bigm|}
\newcommand{\ceil}[1]{\lceil #1 \rceil}
\newcommand{\NP}{\text{NP}}
\newcommand{\W}[1]{\text{W[#1]}}
\newcommand{\FPT}{\text{FPT}}
\newcommand{\ETH}{\text{ETH}}
\newcommand{\prf}{\Pi} 
\newcommand{\indexset}{\mathbb{F}_q^k}
\newcommand{\columnset}{\mathbb{F}_q^d}
\newcommand{\dist}{\mathrm{Hamming}}
\newcommand{\diff}{\mathrm{diff}}
\newcommand{\vs}{\mathrm{vertexset}}
\newcommand{\rand}[1]{\marginpar{\raggedright\footnotesize #1}}
\def\DEBUG{1}
  \newcommand{\yijia}[1]{\rand{\textbf{Yijia: }#1}}
  \newcommand{\yi}[1]{{\rand{\textbf{Yi: }#1}}}
  \newcommand{\yanlin}[1]{{\rand{\textbf{Yanlin: }#1}}}
  \newcommand{\yijia}[1]{}
  \newcommand{\yi}[1]{}
  \newcommand{\yanlin}[1]{}
\title{Simple Combinatorial Construction of the $k^{o(1)}$-Lower Bound
for Approximating the Parameterized $k$-Clique}
\author[1]{Yijia Chen}
\affil[1]{Shanghai Jiao Tong University}
\author[2]{Yi Feng}
\affil[2]{Shanghai University of Finance and Economics}
\author[3]{Bundit Laekhanukit}
\affil[3]{Independent Researcher}
\author[4]{Yanlin Liu}
\affil[4]{Ocean University of China}
\begin{document}
    
    \date{}
    \maketitle
  
    \begin{abstract}
        In the parameterized $k$-clique problem, or $k$-Clique for short, we are given a graph $G$ and a parameter $k\ge 1$. The goal is to decide whether there exist $k$ vertices in $G$ that induce a complete subgraph (i.e., a $k$-clique). This problem plays a central role in the theory of parameterized intractability as one of the first \W 1-complete problems. Existing research has shown that even an \FPT-approximation algorithm for $k$-Clique with arbitrary ratio does not exist, assuming the Gap-Exponential-Time Hypothesis (Gap-\ETH)~[Chalermsook~et~al., FOCS'17 and SICOMP]. However, whether this inapproximability result can be based on the standard assumption of $\W 1\ne \FPT$ remains unclear.
        
        In a breakthrough work of Bingkai Lin~[STOC'21], any constant-factor approximation of $k$-Clique is shown to be \W 1-hard, and subsequently, the inapproximation ratio is improved to $k^{o(1)}$ in the work of Karthik~C.S.\ and Khot [CCC'22], and independently in [Lin, Ren, Sun Wang;
        ICALP'22] (under the apparently stronger complexity assumption ETH).
        All the work along this line follows the framework developed by Lin, which starts from the $k$-vector-sum problem and requires some involved algebraic techniques.

        This paper presents an alternative framework for proving the \W 1-hardness of the $k^{o(1)}$-\FPT-approximation of $k$-Clique. Using this framework, we obtain a gap-producing self-reduction of $k$-Clique without any intermediate algebraic problem. More precisely, we reduce from $(k,k-1)$-Gap Clique to $(q^k, q^{k-1})$-Gap Clique, for any function $q$ depending only on the parameter $k$, thus implying the $k^{o(1)}$-inapproximability result when $q$ is sufficiently large. Our proof is relatively simple and mostly combinatorial. At the core of our construction is a novel encoding of $k$-element subset stemmed from the theory of {\em network coding} and a {\em (linear) Sidon set} representation of a graph.
  
    \end{abstract}

    \section{Introduction}
    \label{sec:intro}

    In the {\em $k$-Clique} problem, we are given an $n$-vertex graph $G$ and an integer $k\ge 1$. The goal is to determine whether $G$ has a clique of size $k$.
    This problem has been a major point of interest in computer science and is one of Karp's 21 NP-complete problems~\cite{Karp72}.
    The computational intractability of $k$-Clique has also been studied in the context of approximation algorithms.
    In the wake of the celebrated PCP theorem, a series of works~\cite{FeigeGLSS96,Hastad01} have been dedicated to the construction of PCPs that capture the inapproximability of $k$-Clique (or more precisely, the {\em Maximum Clique} problem).
    To date, we know that even finding a clique of size $n^{\epsilon}$ from a graph $G$ promised to have a clique of size $n^{1-\epsilon}$, for any constant $\epsilon>0$, is \NP-hard.
    The intractability of $k$-Clique appears again in the context of Parameterized Complexity. Here, we parameterize the $k$-Clique problem by the number $k$ with the expectation that $k$ is much smaller than the size of $G$. Abusing the notation, we call the resulting parameterized problem again $k$-Clique. It is well known that $k$-Clique is a $\W 1$-complete problem, hence admits no \FPT-algorithms, i.e., algorithms with running time $t(k)\poly(n)$ for a computable function $t$, unless $\FPT= \W 1$. Consequently, there is no \FPT-algorithm that, on an input graph $G$ promised to have a $k$-clique, computes a clique of size $k$.
    A natural question arises whether there is an \FPT-algorithm that computes a clique of size $k/f(k)$. Here, $f$ is a computable function such that $k/f(k)$ is unbounded and non-decreasing. Such an algorithm is called an \FPT-approximation of $k$-Clique with approximation ratio $f$.
    Under an assumption apparently stronger than $\W 1\ne \FPT$, namely the {\em Gap Exponential-Time Hypothesis} (Gap-\ETH), the existence of any \FPT-approximation algorithm for $k$-Clique has also been ruled out~\cite{ChalermsookCKLM20}.
    That is, $k$-Clique is {\em totally \FPT-inapproximable} -- there exists no $f(k)$-approximation algorithm for the parameterized $k$-Clique problem that runs in $t(k)\poly(n)$-time, for any computable function $f$ and $t$, depending only on $k$ with $k/f(k)$ unbounded and non-decreasing -- unless Gap-\ETH\ is false.
    It has been a major open problem whether the standard parameterized intractability assumption (i.e., $\mathrm{W}[1]\neq\mathrm{FPT}$) also implies that the $k$-Clique problem is totally FPT-inapproximable.

    In a breakthrough work~\cite{Lin21}, Bingkai Lin show that it is $\W 1$-hard to approximate the parameterized $k$-Clique problem within any constant factor.
    Later on, Karthik~C.S.\ and Khot~\cite{KarthikK22} improve the inapproximation ratio to $k^{o(1)}$ based on the same framework of Lin.
    The same lower bound is also recently obtained under \ETH\ by Lin, Ren, Sun, and Wang~\cite{LinRSW22-Clique}. 
    All these results start from the $k$-vector-sum problem ($k$-Vector Sum for short), which might be viewed as an algebraization of $k$-Clique. Then $k$-Vector Sum is reduced to some appropriate CSP problem where 
    gap between Yes- and No-instances emerges by some tricky algebraic techniques. Finally, by applying the renowned FGLSS reduction~\cite{FeigeGLSS96}, we obtain $k$-Clique instances with the desired gap.
    
    In this paper, we show the same $k^{o(1)}$-appproximation lower bound for $k$-Clique by a relatively simple proof. Although our gap-creating reduction is inspired by  the framework of Lin~\cite{Lin21},
    we deviate from Lin by re-interpreting his construction as the composition of {\em Sidon Sets} and {\em Network Coding} applied directly to the $k$-Clique problem.

    More precisely, the Sidon set is a set $S$ of vectors such that any two distinct vectors in $S$ sum to distinct values.
    This allows us to ``label'' vertices in a graph with vectors so that the sum of any two adjacent vertices yields a unique ``edge-label.''
    The second ingredient is the network coding technique that we use to compress the information of any $k$ vertices of the input graph as linear combinations of vectors from the Sidon set.
    Each linear combination becomes a ``node'' in the resulting graph.
    Now, if two linear combinations have coefficients that differ by one or two positions, then we can subtract them to  determine the vertex or edge they encode. 
    \yijia{Removed according to reviewer C: Thus, it is sufficient to randomly pick $k$ linear combinations whose coefficients are pairwise differ by two positions and decode the $k$-clique.
    Since the corresponding coefficient vectors are linearly independent, it is also solved to a unique solution.} 
    As every node encodes $k$ vertices of the input graph, we can arbitrarily pick each node as the center of the group and determine the validity of the encoding. This enables us to show the existence of some constant gap between Yes- and No-instances. In order to achieve super constant gap, we use the linear form of Sidon sets, i.e., {\em linear Sidon sets} (see Section~\ref{subsec:sidon-sets} for more details). Besides linear Sidon sets and network coding, other parts of our proof are combinatorial.

    
     As already mentioned, our construction gives the same inapproximation ratio as that in~\cite{KarthikK22} by Karthik~C.S.\ and Khot. However, our reduction has better parameters, which might be important for other applications. In fact, \cite{KarthikK22} presents a reduction from $k$-Vector Sum to $(q^{2k^2}, q^{2k^2- 1/k})$-Gap Clique (see Section~\ref{sec:prelim} for the precise definition of Gap Clique), where $q$ is a prime greater than $2^{12k}$.     
     Composing it with the reduction from $k$-Clique to $k$-Vector Sum~\cite{Lin21}, which incurs a quadratic blowup in the parameter, we obtain a self-reduction from $(k, k-1)$-Gap Clique to $(q^{\Theta(k^4)}, q^{\Theta(k^4-1/k^2)})$-Gap Clique, where $q= 2^{\Omega(k^2)}$ is a prime. On the other hand, our self-reduction from $k$-Clique creates instances of $(q^{k},q^{k-1})$-Gap Clique, where $q$ is an arbitrary prime power. As a by-product, this implies another result in~\cite{LinRSW22-Clique} that there is no $t(k)n^{o(\log k)}$-time approximation algorithm for $k$-Clique with constant approximation ratio unless \ETH\ fails.
     Furthermore, we believe that our reduction has the potential to be applied recursively to obtain an arbitrarily large gap.
     Very recently, Lin~et~al. \cite{LinRSW23} proposed a new technique for obtaining a constant-inapproximability with a polynomial parameter blow-up under ETH. However, the constrained satisfaction problem (CSP) that their technique produces requires arity at least three, which means that it is not applicable to prove the inapproximability result under the $\mathrm{W}[1]$-hardness, due to a specific technicality. Moreover, the gap-amplification is limited at $O(\log k/\log\log k)$.
     %

    The comparison of parameter transformation in all the known reductions is shown in~\Cref{tab:comparison}. 

     \begin{table}
         \centering
         \begin{tabular}{c|c|c|c|c}
              Work & Hardness Source &  Parameter Change & Inapprox Ratio & Running Time \\
              \hline
              \cite{Lin21} & $k$-Vector Sum
                           & $K=2^{\Omega(k^3)}$
                           & $O(1)$
                           & $n^{\Omega(\sqrt[6]{\log K})}$\\
                             & {\tiny (from $k'$-Clique, $k=\binom{k'}{2}$)} 
                             & {\tiny for $c > 1$}
                             & {\tiny ($2^{c k^3}$ vs $2^{c k^3-1}$)}
                             & \\
                           
              \cite{KarthikK22}
                           & $k$-Vector Sum
                           & $K=q^{2k^2}$
                           & $K^{o(1)}$
                           & $n^{\Omega\left(\frac{\log K}{\log q}\right)}$\\
                             & {\tiny (from $k'$-Clique, $k=\binom{k'}{2}$)} 
                             & {\tiny for prime $q > 2^{12k}$}
                             & {\tiny ($q^{2 k^2}$ vs $q^{2 k^2-1/k}$)}
                             & \\
              \cite{LinRSW22-Clique}
                           & $k$-Vector Sum
                           & $K=q^{2k}$
                           & $K^{o(1)}$
                           & $n^{\Omega\left(\frac{\log K}{t}\right)}$\\
                             & {\tiny (from $3$-SAT)}
                             & {\tiny for $c > 1$, $t = o(\log k)$}
                             & {\tiny ($2^{c k t}$ vs $2^{ (c k - 1) t}$) }
                             & {\tiny ($t = \Theta(\log q)$ for $q=2^{ct}$)}\\
              \cite{LinRSW23}
                           & $k$-var Vector $2$-CSP
                           & $K=k^{\frac{\log k}{\log\log k}}$
                           & $K^{\Omega(1 - \frac{1}{c \log K}}$
                           & $n^{\Omega\left(\frac{\log K}{\log\log K}\right)}$\\
                             & {\tiny (from $3$-SAT)}
                             & {\tiny for $c > 1$, $t = o(\log k/\log\log k)$}
                             & {\tiny ($k^{t}$ vs $k^{c t}$) }
                             & {\tiny ($t = \Theta(\log q)$ for $q=2^{ct}$)}\\
              This paper
                           & $k$-Clique
                           & $K=q^{k}$
                           & $q$
                           & $n^{\Omega\left(\frac{\log K}{\log q}\right)}$\\
                             & {\tiny (self-reduction)}
                             & {\tiny for any prime $q=\geq 2$}
                             &  {\tiny ($q^{k}$ vs $q^{k - 1}$)}                             
                             & {\tiny for any prime $q\geq 2$}\\
         \end{tabular}
         \caption{Comparison of Parameter Transformation and Running Time Lower Bound under ETH.}
         \label{tab:comparison}
     \end{table}\yijia{Old: in [Lin21] parameter change: $K= 2^{\Omega(k)}$, inapprox ratio: $2^{ck}$ vs $2^{ck-1}$, running time $n^{\Omega(\log K)}$. In [LRSW22], parameter change: $K=q^{2k}$.
     Changed according to reviewer B}


    \section{Preliminaries}
    \label{sec:prelim}
    
    We use standard terminology from graph theory. Let $G=(V,E)$ be a graph, where $V= V(G)$ is the vertex set and $E= E(G)$ is the edge set. An edge in $G$ between two distinct vertices $u,v\in V$ is either denoted by $uv\in E$ or $\{u,v\}\in E$ whichever is convenient.  
    A \emph{$k$-clique} is a complete subgraph of $G$ on $k$ vertices.
    We may refer to a clique in $G$ using a subset of vertices $S\subseteq V(G)$ that induces a complete subgraph. 

    In the $k$-Clique problem, we are given a $G$ and an integer $k\ge 1$, and the goal is to determine whether $G$ has a clique of size at least $k$.
    The \emph{Maximum Clique} problem is a maximization variant of $k$-Clique, where we are asked to find a maximum-size clique in $G$.
    However, we will mostly abuse the name $k$-Clique also for its maximization variant, and here $k$ denotes the optimal solution, i.e., the size of the maximum clique in $G$.\footnote{As far as \FPT-approximation algorithms are concerned, two versions are indeed equivalent~\cite{CGG07}.}
    When we refer to the $k$-Clique problem in the context of parameterized complexity, we mean the $k$-Clique problem parameterized by $k$, i.e., the parameter $k$ is a small integer independent of the size of $G$.

    The $(k,k')$-Gap Clique problem is a promise version of $k$-Clique that asks to decide whether the graph $G$ has a clique of size $k$ or every clique in $G$ has size at most $k'$.
    Again, $k$ is the standard parameter used when discussing parameterized algorithms and parameterized complexity.

    \medskip
    For a prime power $q\in \mathbb N$ we use $\mathbb F_q$ to denote the finite field with $q$ elements. Let $d\ge 1$. As usual, $\mathbb F_q^d$ is the vector space over $\mathbb F_q$, which consists of all vectors of the form
    \[
    r= (r_1, \ldots, r_d)
    \]
    where $r_i\in \mathbb F_q$ for every $i\in [k]$. We also use $r[i]$ to denote the $i$-th coordinate of $r$, i.e., $r[i]= r_i$. Recall that the standard unit vector $e_i\in \mathbb F_q^d$ is defined by
    \begin{equation}\label{eq:ei}
    e_i[i']= \begin{cases}
        1 & \text{if $i'= i$} \\
        0 & \text{otherwise}.
    \end{cases}
    \end{equation}
    Let $r, r'\in \mathbb F_q^d$. We define the set of coordinates (or positions) on which $r$ and $r'$ differ by
    \[
    \diff(r, r')\coloneqq \big\{i\in [d] \bigmid  r[i]\ne r'[i]\big\}.
    \]
    Then the \emph{Hamming distance} between $r$ and $r'$ is 
    \[
    \dist(r,r')
     \coloneqq \big|\diff(r,r')\big|.
    \]

    \subsection{Parameterized Complexity}
    \label{subsec:fpt-complexity}

    We follow the definitions and notations from~\cite{flumG06-FPTBook}.
    In the context of computational complexity, a decision problem is defined as a set of strings over a finite alphabet $\Sigma$, sometimes called a {\em language}, say $\Pi\subseteq\Sigma^*$.
    A {\em parameterization} of a problem is a polynomial-time computable function $\kappa:\Sigma^*\rightarrow \mathbb{N}$.
    A {\em parameterized decision} problem is then defined as a pair $(\Pi,\kappa)$, where $\Pi\subseteq \Sigma^*$ is an arbitrary decision problem, and $\kappa$ is its parameterization.

    The parameterization $\kappa$ gives a characterization of an instance of the designated problem that depends on the instance's property but not on the input size.
    That is, the function $\kappa$ maps an instance of a problem to a small positive integer, and we may think of a parameterized decision problem as a problem where each input instance $x\in\Sigma^*$ is associated with a number $\kappa(x)$ which is typically much smaller than the length $|x|$ of $x$.

    We say that a parameterized problem $(\Pi,\kappa)$ is {\em fixed-parameter tractable} (\FPT) if there is an algorithm that, given a string $x\in \Sigma^*$ decides whether $x\in \Pi$ in time $t(\kappa(x))\poly(|x|)$, where 
    $\poly(n)= \bigcup_{c>0}n^c$ and $t(k)$ is a computable function that depends only on the parameter $k= \kappa(x)$.
    The running time of the form $t(\kappa(x))\poly(|x|)$ is called $\FPT$-time, and the algorithm that decides a parameterized problem in $\FPT$-time is called an \emph{$\FPT$ algorithm}.
    The complexity class $\FPT$ is the class of all parameterized problems that admit $\FPT$ algorithms.
    Similar to the polynomial-hierarchy in the theory of $\NP$-completeness, there exists the classes of W-hierarchy such that
    \[
        \FPT\subseteq \W 1 \subseteq \W 2 \subseteq \cdots \subseteq \W i\subseteq \cdots.
    \]
    The classes $\W 1$ and $\W 2$ contain many natural complete problems, in particular  the $k$-Clique problem and the $k$-Dominating Set problem, respectively.
    Thus, $k$-Clique admits no $\FPT$ algorithm unless $\W 1= \FPT$, and similarly, $k$-Dominating-Set admits no $\FPT$ algorithm unless $\W 2= \FPT$ (which thus implies $\W 2= \W 1 = \FPT$).

    $\FPT$ algorithms have been one of the tools for coping with \NP-hard problems as when the parameter $\kappa(x)$ is much smaller than $|x|$, an $\FPT$ algorithm is simply a polynomial-time algorithm.
    However, as mentioned, assuming $\W 1\ne \FPT$, many important optimization problems like the maximum clique problem and the minimum dominating set problem admit no $\FPT$ algorithm. This leads to the seek of an {\em $f(k)$-\FPT-approximation algorithm} for those optimization problems, where $f$ is a computable function. That is, an \FPT-time algorithm that, given an instance $x$ with the additional parameter $k$ such that cost of an optimal solution, denoted by $\opt(x)$, satisfies
    \[
    \begin{cases}
    \opt(x)\ge k & \text{for maximization problems} \\
    \opt(x)\le k & \text{for minimization problems},
    \end{cases}
    \]
    produces a feasible solution $y$ with
    \[ 
    \begin{cases}
    \cost(y)\ge \frac{k}{f(k)}
     & \text{for maximization problems} \\
    \cost(y)\le k\cdot f(k)
     & \text{for minimization problems}.
    \end{cases}
    \]
    The function $f$ is known as the \emph{approximation ratio} of the algorithm. 
    It should be clear that, for maximization problems, we are only interested in that ratio $ f$ with unbounded and non-decreasing $k/ f(k)$.
    

    The development of \FPT-approximation algorithms has been at a fast pace in both upper bound and lower bound (i.e., the \FPT-inapproximability results).
    Please see, e.g.,~\cite{FeldmannKLM20-Survey} for references therein.
    At present, two core problems in the area of parameterized complexity -- the $k$-Clique problem and the $k$-Dominating Set problem -- are known to be \emph{totally \FPT-inapproximable}, i.e., they admit no \FPT-approximation algorithm parameterized by the size of optimal solution $k$ for any ratio, unless Gap-\ETH\ is false~\cite{ChalermsookCKLM20}, and the total \FPT-inapproximability of the $k$-Dominating Set problem under $\W 1\ne \FPT$ was later proved by Karthik~C.S, Laekhanukit and Manurangsri in~\cite{KarthikLM19}.
    There has been steady progress on the  hardness of approximation for $k$-Clique and $k$-Dominating Set~\cite{LinRSW22-SetCover}.
    Nevertheless, the question of whether $k$-Clique is totally \FPT-inapproximable under $\W 1\ne \FPT$ remains open.

    \yijia{The rest part of this section seems not useful for our current results.}
    To be formal, we say that a maximization (respectively, minimization) problem $\Pi$ parameterized by a standard parameter $k$ (e.g., the size of the optimal solution) is {\em totally FPT-inapproximable} if, for any computable non-decreasing functions $t(k)$ and $f(k)$, depending only on $k$, given an input of size $n$, there exists no algorithm running in time $t(k)\poly(n)$ that outputs a feasible solution with the cost at least $k/f(k)$ (respectively, at most $f(k)\cdot k$ for minimization problems).

    \begin{openproblem}[${\mathrm{W}[1]}$-hardness of $k$-Clique]
        \label{open:clique}
        Does the total FPT-inapproximability of the $k$-Clique problem hold under $\mathrm{W}[1] \neq \mathrm{FPT}$?
    \end{openproblem}

    A similar problem is also open for the $k$-Dominating Set problem.

    \begin{openproblem}[${\mathrm{W}[2]}$-hardness of $k$-Dominating Set]
        \label{open:domset}
        Does the total FPT-inapproximability of the $k$-Dominating Set problem hold under $\mathrm{W}[2] \neq \mathrm{FPT}$?
    \end{openproblem}

    Lastly, another conjecture has been established as a generic tool for deriving FPT-inapproximability result, namely the {\em Parameterized Intractability Hypothesis} (PIH).
    The conjecture involves asking whether a {\em $2$-Constraint Satisfaction} problem ($2$-CSP) on $k$ variables,
    where $k$ is a constant, while each variable, say $X_i$, takes a value from $[n]$ admits no constant FPT-approximation algorithms.
    More formally, $2$-CSP is a constraint satisfaction problem in which each constraint involves exactly two variables, and it is hypothesized that there is no constant-factor approximation for such a $2$-CSP parameterized by $k$ in FPT-time:

    \begin{hypothesis}[Parameterized Intractability Hypothesis]
        For some constant $\epsilon>0$, there is no $(1+\epsilon)$-factor FPT-approximation algorithm for $2$-CSP on $k$ variables $2$-CSP on $k$ variables parameterized by $k$, where each variable takes value from $[n]$.
    \end{hypothesis}

    The result of Chalermsook~et~al.\cite{ChalermsookCKLM20}
    implies that PIH holds under Gap-ETH, which has been improved to merely assuming \ETH\ in a very recent breakthrough~\cite{ETHPIH}.
    Nevertheless, it has been a major open problem whether PIH holds under the $\mathrm{W}[1]$-hardness:

    \begin{openproblem}[Parameterized Intractability Hypothesis]
        \label{open:PIH}
        Does the Parameterized Intractability Hypothesis hold under $\mathrm{W}[1]\neq \mathrm{FPT}$?
    \end{openproblem}
    
    \subsection{Network Coding}
    \label{subsec:network-coding}
    
    Network coding is an information compression technique used in multi-input multicast networks. This approach was first introduced in~\cite{AhlswedeCLY00-net-info-flow}, formalized in \cite{HoLMKCE03-netcoding} for delay-free acyclic networks and then generalized to the general case in \cite{Karger:NCoding1}. Please see \cite{YuengLC-Netcoding-Book} for references therein. 
    
    In the model of {\em random linear network coding}, the data are transmitted from $k$ different source nodes to $k$ different destinations as $k$ vectors over the finite field $\mathbb{F}_q^d$, say $v_1,\ldots,v_k \in {\mathbb{F}}^d_q$, where $q$ is a sufficiently large prime power. Whenever data packets meet at an intermediate node, the data are compressed and transmitted as a linear combination of the $k$ vectors with random coefficients, i.e., 
    \[
    r_1v_1 + r_2v_2 + \ldots + r_kv_k,
    \]
    where $r_i\in \mathbb F_q$ for every $i\in[k]$
    
    Given that a sink node receives enough packets, it can decode the information correctly by solving the linear system. More precisely, in the linear network coding with random coefficients, it was proved that $O(k)$ packets are enough to guarantee the existence of $k$ linearly independent vectors. Thus, there is a unique solution to the linear system, allowing the sink nodes to retrieve and correctly decode the information. 

    The advantage of network coding is in the efficiency of the throughput, which is close to the optimum, and the amount of memory required to store the packets. 
    
    The connection between the network coding and the hardness of approximating $k$-Clique is not known until the breakthrough result of Lin~\cite{Lin21}, who implicitly applied the network coding approach to compress the information of $k$ vertices as one single vector. Once we have $k$ vectors with linear independent coefficients, then the information of the original $k$ vertices can be decoded uniquely, thus allowing one to check whether the encoded $k$ vertices form a $k$-clique or not. 
    As every vector encodes information of $k$ vertices, every subset of $k$ linearly independent vectors gives a unique solution to the linear system, making the verifier reject a graph that has no $k$-clique by reading only a tiny portion of the vectors. This, consequently, gives an approximation hardness of the $k$-Clique problem through a very clever reduction from this specific probabilistic proof system to an instance of $k$-Clique.

    Technically, the network coding is the same as Hadamard code, but it is more in line with our intuition of the self-reduction creating the gap for the $k$-Clique problem. See Section~\ref{sec:overview} for more detailed discussions.
    
    \subsection{Sidon Sets and Generalization}
    \label{subsec:sidon-sets}

    %
    %
%

    A \emph{Sidon set} is a subset $S$ of an abelian group
    such that the sum of any two distinct elements in $S$ are different, i.e., for any $x,y,x',y' \in S$ with $x\ne y$ and $x'\ne y'$, we have $x + y = x' + y'$ if and only if $\{x, y\}=\{x', y'\}$.
    Given a positive integer $n\ge 1$, a Sidon set $S\subseteq \mathbb Z/n\mathbb Z$ of size at least $n^{1/3}$ can be constructed in polynomial time using a greedy algorithm (the algorithm is attributed to Erd\"{o}s): Start with a set $A= \emptyset$.
    Iteratively add to $A$ a number $x\in \{0, \ldots, n-1\}\setminus A$ such that $x$ cannot be written as $a + b - c$, for any $a,b,c\in A$ until no such number exists.
    Erd\"{o}s and Tur\'{a}n~\cite{ErdosT1941} showed that a Sidon set of size $\sqrt{n}$ can also be constructed efficiently using quadratic residues.
    In particular, they showed that the set $S= \{2pk + (k^2): 1 \leq k \leq p\}$ is a Sidon set, where $p$ is a fixed prime, and all the operations are done under $\mathbb{F}_{p}$.
    Observe that if the vertices of a graph are from a Sidon set, then the endpoints of every edge sum to a unique element in the underlying abelian group. Thus, representing a graph vertices with a Sidon set makes it open to standard arithmetic operation, albeit under some finite fields.

    There are a few generalizations of Sidon sets~\cite{OBryant2004}.
    For our purposes, we require the linear form of Sidon sets, which have been studied in the works of Ruzsa~\cite{Ruzsa1993,Ruzsa1995}, which we call {\em linear Sidon sets}.

    \begin{definition}
        \label{def:eSidon}
        Let $\mathbb F$ be a finite field and $d\ge 1$.
        A subset $S\subseteq \mathbb
        F^d$ is an \emph{Linear Sidon set} if for all $a,b\in \mathbb F^* (=
        \mathbb F\setminus \{0\})$ and $x,y,x',y'\in S$ with $x\ne y$ and $x'\ne y'$
        we have
        \begin{eqnarray*}
            ax+by= ax'+by' & \Longrightarrow &
            \{x,y\}= \{x',y'\}.
        \end{eqnarray*}
    \end{definition}

    The above definition is a natural generalization of the {\em sum-free} set~\cite{CameronErdosConjecture,Green2004-CameronErdos} and the set excluding no three-terms arithmetic progression.
    Our linear Sidon set, on the other hands, is a special case of the $(M,b)$-free set~\cite{Green2005-reg-lemma,Shapira09} defined as a set $S$ in which there exists no non-trivial solution to a linear system $Mx = b$ takes value from $S^{\ell}$.

    \section{Overview of Our Proof}\label{sec:overview}

    In this section, we outline the key ideas and steps in our construction. The input to our reduction is a graph $G$ and an integer $k \ge 1$. For technical reasons, we assume without loss of generality that $G$ is an instance of the multi-colored $k$-Clique problem, where
    \[
    V(G)= \biguplus_{i\in [k]} V_i,
    \]
    and each vertex set $V_i$ is an independent set in $G$. Thus, any $k$-clique in $G$ must include exactly one vertex from each $V_i$. Our objective is to construct a graph $H$ from $G$ and $k$ such that there is a significant gap between the size of the maximum clique in $H$ for the {\bf Yes-Instance} (where $G$ has a $k$-clique) and the {\bf No-Instance} (where $G$ has no $k$-clique).

    \paragraph{Informal Discussions.} Let us begin with an informal explanation of the intuition behind our reduction.
    
    Generally, the gap between the Yes and No cases can be created using the $\ell$-wise graph product for some $\ell \ge 1$, generating the graph $G^\ell$ where each node in $G^\ell$ represents a clique of size $\ell$ in $G$. Thus, any $k$-clique in $G$ translates into a clique of size $k^\ell$ in $G^\ell$, creating a gap of $k^\ell$ versus $(k-1)^\ell$. However, achieving a constant gap requires $\ell$ to be at least $\Omega(k)$ (e.g., $\ell = k/2$), which results in a graph $G^\ell$ with size $n^{O(k)}$. This makes a standard graph product or self-reduction insufficient to achieve constant inapproximability under FPT-reduction.

    To overcome this, we employ a technique from network coding theory to compress the representation of $G^\ell$. Specifically, we encode each subset of $k$ vertices in $G$ as a linear combination of $k$ vertices represented by vectors over a finite field $\mathbb{F}^d_q$. Each ``node'' in the new graph corresponds to a linear combination $\pi = {\bf r}^T{\bf v}$, where ${\bf v}$ is a column vector of $k$ vertices.

    The new graph consists of $q^k$ groups of nodes, or {\em color classes}, each corresponding to the coefficients of the linear combination. Each color class forms an independent set of nodes representing all possible linear combinations. Thus, any algorithm can select at most one node from each color class to form a clique. The edges between nodes from different color classes encode consistency between two linear combinations: an edge exists between two nodes if both encode the same ``valid'' information.
    
    In network coding, the decoding is straightforward; the receiver node waits until $k$ linearly independent combinations arrive. Senders cannot deceive by sending inconsistent linear combinations since the linear system has a unique solution. However, in the construction of the $k$-Clique instance, each edge represents the decoding of information from only two linear combinations. In proof systems, the $k$-Clique instance can be viewed as a {\em clique-test} involving two-query tests performed $\binom{k}{2}$ times. Each two-query test decodes only two nodes and is effective only when the two linear combinations differ by at most two coordinates. If they differ by one coordinate, the subtraction yields a numerical encoding of a vertex in the original graph. If they differ by two coordinates, it produces the sum of two numbers, which encodes two different nodes and purportedly an edge. This creates a uniqueness issue, as a single equation with two variables may not have a unique solution. We resolve this by using a Sidon Set from additive combinatorics, which ensures that the addition or subtraction of any two elements in the set is unique, thereby addressing the non-uniqueness problem.

    This approach departs from previous works \cite{Lin21} and \cite{KarthikK22}, which used a similar encoding but treated it as {\em Hadamard} codes. Karthik and Khot \cite{KarthikK22} resolved the uniqueness issue by leveraging properties of Hadamard codes, arguing that most two-query tests have unique solutions. In contrast, we rely on the Sidon Set's properties to resolve these issues.

    Our proof further diverges by viewing the encoding through the lens of network coding, using linear independence to argue consistency. Specifically, it is not guaranteed that two different nodes encode the same information—an issue addressed in locally testable codes through {\em local codeword testing}. However, we bypass local testing by arguing via linear independence, a technique applicable exclusively to the clique problem. For any node $v$, we treat it as the center of a ball, adding $k$ nodes whose coefficients differ by one position (within Hamming distance one). The $k$ nodes are linearly independent, ensuring a unique solution, preventing adversaries from feeding inconsistent information. This allows us to {\bf completely bypass local testing}, an advantage unique to the clique problem.

    More concretely, our argument proceeds as follows. If the input graph contains a $k$-clique, then there exists a clique of size $q^k$. Otherwise, let $Q$ be a clique provided by any algorithm. We know that each node in $Q$ is from a different color class. Selecting any node $v$ as a center, we consider the nodes around it within Hamming distance one (with respect to coordinates). As discussed, these nodes form a linearly independent set, yielding a unique solution. Moreover, any two non-center nodes that are in different directions from the center are at Hamming distance two, triggering an {\em edge test}. An edge exists between these two nodes if their subtraction decodes to an edge in the original graph.

    Combining the consistency property with the uniqueness of decoding, we conclude that each ball has non-center nodes in all $k$ directions away from the center if and only if their decoding forms a clique in the original graph. Therefore, we establish the following: (1) there are $q^k$ centers, leading to $q^k$ balls, and (2) each ball misses at least one direction, resulting in the loss of $q$ balls in that direction. Using a {\em two-term} Sidon Set, we show that each of the missing $q$ balls can belong to at most $k$ balls, implying that if the original graph has no $k$-clique, the resulting graph can have a clique of size at most $q^k \cdot k / q = q^{k-1} / k$. Finally, we will show that double counting does not occur if we employ a higher-order Sidon Set, say a {\em four-term} Sidon Set.

    \paragraph{More formal description.}

    Now let us give a more formal description of our reduction at some high level.
    To begin with, we identify the vertex set $V(G)$ with a linear Sidon set $S$ over a finite field $\mathbb F_q^d$ for an appropriate integer $d\ge 1$. We will show in Section~\ref{sec:linear-Sidon} that such $S= V(G)$ can be constructed with the size of $\mathbb F_q^d$ polynomially bounded by $|V(G)|$. As the second step, for a fixed vector $r= (r_1, \ldots, r_k)\in \mathbb F_q^k$ every $k$-element subset $\{v_1, \ldots, v_k\}\subseteq V(G)$ with $v_i\in V_i$ for every $i\in [k]$ is associated with 
    \begin{equation}\label{eq:kelementencoding}
    \pi = r_1v_1+ r_2v_2+\cdots + r_kv_k, 
    \end{equation}   
    exactly the same way as we transmit $k$ vectors $v_1, \ldots, v_k$ by the network coding using $r_1, \ldots, r_k$ as coefficients. Thereby, for every $r\in \mathbb F_q^k$ we have a copy of $\mathbb F_q^d$ as the \emph{column $C_r$ of vertices in $H$ indexed by $r$}. 
    

    Given that the size of $\mathbb F_q^d$ is polynomial in $|V(G)|$, there are only a polynomially bounded number of  linear combinations of $k$ vectors. However, the number of $k$-element subsets of $V(G)$ is $\Omega(n^k)$. This is simply because each vector $\pi$ obtained in~\eqref{eq:kelementencoding} encodes  many different $k$-element subsets. However, if we have two $\pi$ and $\pi'$ from two columns $C_r$ and $C_{r'}$ with the Hamming distance between $r= (r_1, \ldots, r_k)$ and $r'= (r'_1, \ldots, r'_k)$ being one, i.e., $r_i\ne r'_i$ for exactly one $i\in [k]$, then $\pi- \pi'$ will give us a unique vertex $v_i\in V_i$ with 
    \begin{equation}\label{eq:vi}
    v_i= (r_i-r'_i)^{-1}(\pi- \pi').
    \end{equation} 
    Of course, it could happen that $(r_i-r'_i)^{-1}(\pi- \pi')$ is not an element in the linear Sidon set $S= V(G)$. Hence, we only add an edge between $\pi$ and $\pi'$ in case~\eqref{eq:vi} is really a vertex in $G$. In other words, the edges between columns $r$ and $r'$ test whether two encoding $\pi\in C_r$ and $\pi'\in C_{r'}$ are consistent on their $i$-th positions where $r_i\ne r'_i$. We might say that $\pi$ and $\pi'$ \emph{pass the vertex test by the vertex $v_i\in V(G)$}.
    
    Now assume that, with $r\in \mathbb F^k_q$ we have for all $i\in [k]$ an $r^i\in \mathbb F_q^k$ which differs from $r$ only on the $i$-th position. Furthermore, with $\pi\in C_r$,  assume that there are $\pi^1\in C_{r^1}, \ldots, \pi^k\in C_{r^k}$ all adjacent to $\pi$ in $H$. By the above construction, we have $v_1\in V_1, \ldots, v_k\in V_k$ with each pair $\pi$ and $\pi^i$ passing the vertex test by the vertex $v_i$. Then if $v_i$ and $v_{i'}$ are adjacent in $G$ for all $1\le i< i'\le k$, the original graph $G$ has a $k$-clique, i.e., $\{v_1, \ldots, v_k\}$. On the other hand, it is easy to see that $r_i$ and $r_{i'}$ differ exactly on two positions, i.e., $i$ and $i'$. As $v_i$ and $v_{i'}$ are both from the linear Sidon set $S= V(G)$, it implies that they are uniquely 
    determined by $\pi^i- \pi^{i'}$. Thus we add an edge between $\pi^i$ and $\pi^{i'}$ exactly when there is an edge between the corresponding $v_i$ and $v_{i'}$ in $G$. As a consequence, 
    \begin{equation}\label{eq:kclique}
    \text{if $\pi, \pi^1, \ldots, \pi^k$ induce a clique in $H$, then $G$ has a $k$-clique}.
    \end{equation}
    
    More generally, we have two $r= (r_1, \ldots, r_k)$ and $r'= (r'_1, \ldots, r'_k)$ in $\mathbb F_q^k$ with Hamming distance two, i.e., $r_i= r'_i$ for all $i\in [k]$ but two $1\le i_1< i_2\le k$. Moreover, $\pi\in C_{r}$ and $\pi'\in C_{r'}$ with 
    \begin{equation}
    \pi- \pi'= (r_{i_1}-r'_{i_1})v 
     + (r_{i_2}-r'_{i_2})v'.
    \end{equation}
    Then there is an edge between $\pi$ and $\pi'$ if and only if $v$ and $v'$ are adjacent in $E(G)$. That is, $\pi$ and $\pi'$ \emph{pass the edge test with the edge $vv'\in E(G)$}.
    
    To summarize, the graph $H$ consists of columns $C_r$ of vertices indexed by $r\in \mathbb F_q^k$. Each column $C_r$ is a copy of $\mathbb F_q^d$ which is supposed to encode a $k$-element subset of $V(G)$ by the linear combination of the form~\eqref{eq:kelementencoding}. The edges between two columns $C_r$ and $C_{r'}$ depend on the Hamming distance between $r$ and $r'$. They correspond to the vertex test and the edge test if the distance between $r$ and $r'$ is one or two, respectively. If it is more than two, then we will add all the edges between $C_r$ and $C_{r'}$. Figure~\ref{fig:H} illustrates a part of the construction of $H$.
    
    \begin{figure}
    \centering
    \begin{tikzpicture}[
  scale=.3,
  vertex/.style={circle,inner sep=0pt,minimum size=1mm,fill=black},
  ]


\begin{scope}

    \filldraw[fill opacity=0.2, gray] (0,0) ellipse (2cm and 3cm);
    \path (0,-4.5) node[above] {\footnotesize $C_{r}$};
    \node[vertex] (pi) at (0,1.2) {};
    \path (pi) node[below] {\footnotesize $\pi$};

    \filldraw[fill opacity=0.2, gray] (6,6) ellipse (2cm and 3cm);
    \path (6,9) node[above] {\footnotesize $C_{r^1}$};
    \node[vertex] (pi1) at (5.5,7.5) {};
    \path (pi1) node[right] {\footnotesize $\pi^1$};

    \filldraw[fill opacity=0.2, gray] (-6,6) ellipse (2cm and 3cm);
    \path (-6,9) node[above] {\footnotesize $C_{r^2}$};
    \node[vertex] (pi2) at (-5.8,5) {};
    \path (pi2) node[left] {\footnotesize $\pi^2$};

    \filldraw[fill opacity=0.2, gray] (14,0) ellipse (2cm and 3cm);
    \path (14,-4.5) node[above] {\footnotesize $C_{r^{\infty}}$};
    \node[vertex] (piinfty) at (14,-.5) {};
    \path (piinfty) node[above] {\footnotesize $\pi^{\infty}$};

    \draw[-] (pi)--(pi1);
    \draw[-] (pi)--(pi2);
    \draw[-] (pi1)--(pi2);
    \draw[-] (pi)--(piinfty);

\end{scope}

\end{tikzpicture}
    \caption{Let $r,r^1,r^2,r^{\infty}\in \mathbb F_q^k$ with $\dist(r,r^1)=1$, $\dist(r,r^2)=1$, $\dist(r^1, r^2)= 2$, and $\dist(r,r^{\infty})\ge 3$. More precisely, say $r$ and $r^1$ differ on their \emph{first} positions, and $r$ and $r^2$ on their \emph{second} positions, hence $r^1$ and $r^2$ differ exactly on their first and second positions. As a consequence, the edge between $\pi\in C_r$ and $\pi^1$ means that $\pi- \pi^1= (r[1]-r^1[1])v_1$ for some $v_1\in V_1$, and similarly $\pi-\pi^2= (r[2]-r^2[2])v_2$ for some $v_2\in V_2$ by the edge $\pi\pi^2$. Here, we use $r[1]$ to denote the first coordinate of the vector $r\in \mathbb F_q^k$, and similarly, $r^1[1]$ is the first coordinate of $r^1$. Furthermore, the edge between $\pi^1$ and $\pi^2$ implies that $v_1v_2$ is an edge in the original graph $G$. Finally, since $\dist(\pi, \pi^{\infty})\ge 3$, there is an edge between $\pi$ and any $\pi^{\infty}\in C_{r^{\infty}}$.}
    \label{fig:H}
    \end{figure}
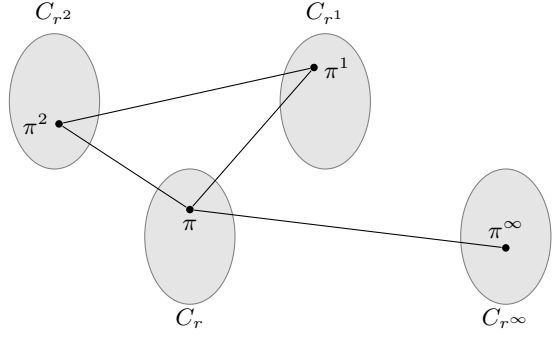
    
    If $G$ has a $k$-clique of vertices $v_1, \ldots, v_k$, then for every column $C_r$ we can pick $\pi\in C_r$ as~\eqref{eq:kelementencoding}. Our construction ensures they form a clique in $H$ of size $q^k$. Otherwise, i.e., any clique $K$ in $G$ has size at most $k-1$, we will argue that the size of a maximum clique in $G$ is on the order of $q^{k-1}$. The key observation is that, as implied by~\eqref{eq:kclique}, for every $r\in \mathbb F_q^d$, if there is a $\pi\in K\cap C_r$, then for at least one $i\in [k]$ for all $r'$ which differs from $r$ in exactly the $i$-th position, the column $C_{r'}$ contains no vertex in $K$.
    
    \section{Construction of Linear Sidon Sets}
    \label{sec:linear-Sidon}

    This section presents the construction of a linear Sidon set of a given size using a greedy algorithm similar to that of Erd\"{o}s mentioned in the Preliminaries.

    \medskip
    Fix a finite field $\mathbb F$ and $d\ge 1$. To ease the presentation, we introduce a further technical notion.
    \begin{definition}\label{def:4linearindp}
    A subset $S\subseteq \mathbb F^d$ is \emph{$4$-term linearly independent} if every $S'\subseteq S$ with $|S'|\le 4$ is linearly independent. 
    Equivalently, every $x\in S$ is not a linear combination of three (not necessarily distinct) vectors in $S\setminus \{x\}$.
    \end{definition}

   \begin{lemma}\label{lem:4linearindp}
   Let $S\subseteq \mathbb F^d$ be $4$-term linearly independent. Then $S$ is a linear Sidon set.
   \end{lemma}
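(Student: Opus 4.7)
The plan is to verify the Linear Sidon identity of Definition~\ref{def:eSidon} directly from the $4$-term linear independence hypothesis. Concretely, I fix $a, b \in \mathbb F^*$ and $x, y, x', y' \in S$ satisfying $x \ne y$, $x' \ne y'$, and $ax + by = ax' + by'$, rewrite this as $ax + by + (-a)x' + (-b)y' = 0$, and aim to conclude $\{x, y\} = \{x', y'\}$ by case analysis on the size $t := |\{x, y\} \cap \{x', y'\}|$.

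When $t = 0$, the four vectors $x, y, x', y'$ are distinct elements of $S$ and the rearranged equation is a nontrivial linear dependence among them, since all four coefficients $a, b, -a, -b$ lie in $\mathbb F^*$. This directly contradicts $4$-term linear independence. When $t = 2$, the two $2$-element sets $\{x,y\}$ and $\{x',y'\}$ already coincide and there is nothing to prove.

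The substantive case is $t = 1$, which I split by which element of $\{x,y\}$ coincides with which element of $\{x',y'\}$. A ``parallel'' identification such as $x = x'$ collapses the equation to $by = by'$, forcing $y = y'$ and thus $t = 2$, contradicting the assumption $t = 1$. A ``crossed'' identification such as $x = y'$ (with $x \ne x'$ and $y \notin \{x', y'\}$) produces the relation $(a - b)x + by + (-a)x' = 0$ on the three distinct vectors $x, y, x' \in S$; the coefficient $b$ on $y$ is nonzero, so this is a nontrivial dependence on a subset of $S$ of size at most three, once again violating $4$-term linear independence. By symmetry, the two remaining subcases ($y = y'$ parallel and $y = x'$ crossed) are handled identically.

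The one subtlety I anticipate is keeping the crossed subcase honest: one must observe that even if the coincidence $a = b$ makes the coefficient $(a - b)$ of $x$ vanish, the coefficient $b$ on the distinct vector $y$ remains nonzero, so the linear dependence is still nontrivial and $4$-term linear independence applies. Apart from this bookkeeping point, the argument is a mechanical unpacking of definitions.
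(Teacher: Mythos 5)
Your proof is correct and amounts to the same argument as the paper's: rewrite $ax+by=ax'+by'$ as a linear dependence on the multiset $\{x,y,x',y'\}\subseteq S$ and invoke $4$-term linear independence. The paper streamlines your case analysis by one WLOG step — choosing an element $x\notin\{y,x',y'\}$ (possible since $\{x,y\}\ne\{x',y'\}$ are $2$-element sets) and directly expressing $x = x'+a^{-1}by'-a^{-1}by$ as a combination of three elements of $S\setminus\{x\}$ — whereas you enumerate the overlap sizes $t=0,1,2$ and the parallel/crossed subcases of $t=1$ explicitly; both are correct mechanical unpackings of Definition~\ref{def:4linearindp}.
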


   \begin{proof}
   Let $a,b \in \mathbb F^*= \mathbb F\setminus \{0\}$ and $x,y,x',y'\in S$ such that $x\ne y$, $x'\ne y'$, and $ax+ by= ax'+ by'$. Assume that
   $\{x,y\}\ne \{x',y'\}$.
   By symmetry, we can further assume without loss of generality
   \[
   x\notin \{y, x',y'\}.
   \]
   Note that
   \[
   x= a^{-1}(ax'+ by'- by)= x'+ a^{-1}by'- a^{-1}by.
   \]
   Thus $x$ is a linear combination of $x',y',y\in S\setminus \{x\}$, contradicting Definition~\ref{def:4linearindp}.      
   \end{proof}
    
    By induction, we will construct a sequence
    \[
        S_0\subsetneq S_1\subsetneq \ldots \subsetneq S_i \subsetneq \ldots \subsetneq \mathbb F^d
    \]
    of $4$-term linearly independent subsets of $\mathbb F^d$, hence linear Sidon sets by~Lemma~\ref{lem:4linearindp}. We start with $S_0\coloneqq \emptyset$ which vacuously satisfies Definition~\ref{def:4linearindp}.
    Let $i\ge 0$ and assume that the $4$-term linearly independent $S_i\subseteq \mathbb F^d$ has already
    been constructed with $|S_i|=i$. We define a set
    \begin{align*}
        \tspan(S_i) \coloneqq \big\{ax+by+cx'\bigmid
        & \text{$a,b,c\in \mathbb F$ and $x,y,x'\in S_i$}
        \big\}.
    \end{align*}
    It is easy to see
    \begin{equation*}
        |\tspan(S_i)|\le |S_i|^3\cdot |\mathbb F|^3= i^3\cdot |\mathbb F|^3.
    \end{equation*}
    Assume
    \begin{equation}
        \label{eq:sizespan}
        i^3\cdot |\mathbb F|^3 < |\mathbb F|^d = \left|\mathbb F^d\right|,
    \end{equation}
    then there is a
    \begin{equation*}
        \label{eq:w}
        w\in \mathbb F^d\setminus \tspan(S_i).
    \end{equation*}
    Choose arbitrarily such a $w$ and let
    \[
        S_{i+1} \coloneqq S_i\cup \{w\}.
    \]
    Clearly $S_{i+1}$ is again $4$-term linearly independent.
    
        \label{eq:comb}

    \medskip
    \begin{theorem}
        \label{thm:eSidon}
        Let $\mathbb F_q$ be a finite field and $n\ge 1$.
        We set
        \[
            d \coloneqq \ceil{\frac{3\log n}{\log q}+3}.
        \]
        Thus
        \[
            n\le q^{(d-3)/3}.
        \]
        Then we can construct a 4-term linearly independent set
        $S\subseteq \mathbb F_q^d$ with
        $|S|= n$ and $|\mathbb F_q^d|\le q^4\cdot n^3$ in time polynomial in $n+q$. Observe that $S$ is also a linear Sidon set by Lemma~\ref{lem:4linearindp}. 
    \end{theorem}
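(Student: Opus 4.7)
The plan is to instantiate the greedy construction already described just above the theorem with the stated choice of $d$, and then verify (i) that the greedy step succeeds for $n$ iterations, (ii) that $4$-term linear independence is preserved at each step, (iii) the size bound on $\mathbb F_q^d$, and (iv) polynomial running time. The core argument is already laid out in the text; what remains is book-keeping and a short case analysis.

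First, I would verify the two size inequalities forced by the choice $d = \ceil{3 \log n / \log q + 3}$. Since $d \ge 3\log n/\log q + 3$, we have $q^d \ge n^3 \cdot q^3$; combined with the bound $|\tspan(S_i)| \le i^3 \cdot q^3$ derived before the theorem, this gives $|\tspan(S_i)| < q^d$ for every $i < n$, so~\eqref{eq:sizespan} holds and the greedy step always finds some $w \in \mathbb F_q^d \setminus \tspan(S_i)$. Conversely, $d \le 3 \log n / \log q + 4$ gives $|\mathbb F_q^d| = q^d \le q^4 \cdot n^3$, matching the ambient-space bound in the statement. Iterating $n$ times then produces $S_0 \subsetneq S_1 \subsetneq \cdots \subsetneq S_n$ with $|S_n| = n$.

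Next I would verify, by induction on $i$, that each $S_{i+1}$ is $4$-term linearly independent. The base case $S_0 = \emptyset$ is vacuous. For the inductive step, suppose for contradiction that there is a nontrivial dependence $\alpha_0 u_0 + \alpha_1 u_1 + \alpha_2 u_2 + \alpha_3 u_3 = 0$ with $u_j \in S_{i+1}$ (not necessarily distinct) and $\alpha_j \in \mathbb F_q$ not all zero. If $w$ is not among the $u_j$, the dependence lies inside $S_i$, contradicting the induction hypothesis. Otherwise, say $u_0 = w$; if $\alpha_0 = 0$ we again contradict the induction hypothesis, and if $\alpha_0 \ne 0$ we solve for $w = -\alpha_0^{-1}(\alpha_1 u_1 + \alpha_2 u_2 + \alpha_3 u_3) \in \tspan(S_i)$, contradicting the choice of $w$. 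The only subtlety here is to ensure that $\tspan(S_i)$ is defined so as to allow zero coefficients and repeated entries, so that dependencies involving fewer than four distinct vectors from $S_i$ are all captured; this is built into the definition given just before the theorem, and I expect this bookkeeping to be the main place where one has to be careful.

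Finally, for the running time, at each iteration I would explicitly enumerate $\tspan(S_i)$ by running over all $(a,b,c) \in \mathbb F_q^3$ and $(x,y,x') \in S_i^3$ (at most $q^3 n^3$ triples), hash the resulting vectors, and then scan elements of $\mathbb F_q^d$ (at most $q^4 n^3$ many) until one is not in the hash table. This is polynomial in $n + q$, and aggregating over the $n$ iterations preserves polynomiality. Lemma~\ref{lem:4linearindp} then promotes $S = S_n$ from $4$-term linearly independent to a linear Sidon set, completing the proof.
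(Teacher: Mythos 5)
Your proof takes essentially the same approach as the paper: instantiate the greedy construction with the given $d$, use the $(n-1)^3 q^3 < q^d$ count to guarantee $n$ successful iterations, and use $q^d \le n^3 q^4$ for polynomial time. You fill in the inductive verification of $4$-term linear independence, which the paper's pre-theorem discussion dismisses with ``Clearly $S_{i+1}$ is again $4$-term linearly independent,'' and your observation that $\tspan(S_i)$ must allow zero coefficients and repeats is exactly the bookkeeping that makes that claim go through.
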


    \begin{proof}
        Following the greedy strategy as we described, we construct a sequence
        of 4-term linearly independent 
        \[
            S_0\subsetneq S_1\subsetneq \ldots \subsetneq S_i \subsetneq \ldots \subsetneq \mathbb F^d.
        \]
        As
        \[
            (n-1)^3\cdot |\mathbb F_q|^3<
            n^3\cdot |\mathbb F_q|^3 =
            n^3\cdot q^3 \le q^{d-3}\cdot q^3 = q^d= |\mathbb F_q|^d,
        \]
        by~\eqref{eq:sizespan} we conclude that $S_n$ can be constructed.
        Thus we can
        take $S \coloneqq S_n$ with $|S|=n$.

        \smallskip To see that the greedy algorithm runs in polynomial time, it
        suffices to observe that
        \[
            |\mathbb F_q^d|= q^d=q^{\ceil{\frac{3\log n}{\log q}+3}}\le n^3\cdot q^4. \qedhere
        \]

    \end{proof}

    \subsection{Multi-term linearly independent sets}
    For the construction in Subsection~\ref{subsec:improve}, we need to consider linear combinations of more than two vectors in a given set $S$. This leads to the following definition.
    

     \begin{definition}\label{def:teSidon}
     Let $\mathbb F$ be a finite field and $t\ge 1$. A subset $S\subseteq \mathbb F^d$ is \emph{$t$-term linearly  independent} if every $S'\subseteq S$ with $|S'|\le t$ is linearly independent.
     \end{definition}

     We leave the details of the construction of $t$-term linearly independent sets to the reader, which is a straightforward generalization of Theorem~\ref{thm:eSidon}.
     \begin{theorem}
        \label{thm:teSidon}
        Let $\mathbb F_q$ be a finite field, $t\ge 1$  and $n\ge 1$.
        We set
        \[
            d \coloneqq \ceil{\frac{(2t-1)\log n}{\log q}+ 2t- 1}.
        \]
        Thus
        \[
            n\le q^{(d-(2t-1))/(2t-1)}.
        \]
        Then we can construct a $t$-term linearly independent set $S\subseteq \mathbb F_q^d$ with
        $|S|= n$ in time polynomial in $n^t+q^t$.
    \end{theorem}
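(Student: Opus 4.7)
The plan is to mimic the greedy construction used in the proof of Theorem~\ref{thm:eSidon}, with the auxiliary span operator extended from three summands to $t-1$ summands. Concretely, for a finite subset $S\subseteq \mathbb F_q^d$ I would define
\[
\tspan_t(S) \coloneqq \Bigl\{\textstyle\sum_{j=1}^{t-1} c_j x_j \Bigmid c_1,\ldots,c_{t-1}\in \mathbb F_q,\ x_1,\ldots,x_{t-1}\in S\Bigr\},
\]
which for $t=4$ recovers the $\tspan$ of Theorem~\ref{thm:eSidon}. I then build a chain $S_0\subsetneq S_1\subsetneq\cdots$ by setting $S_0 \coloneqq \emptyset$ and, at stage $i$, choosing any $w\in \mathbb F_q^d\setminus \tspan_t(S_i)$ and letting $S_{i+1}\coloneqq S_i\cup\{w\}$. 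Such a $w$ exists whenever $|\tspan_t(S_i)| < |\mathbb F_q^d|$.

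The key invariant to maintain is that every $S_i$ is $t$-term linearly independent in the sense of Definition~\ref{def:teSidon}. I would verify this by induction on $i$: suppose some subset $\{w,x_1,\ldots,x_s\}\subseteq S_{i+1}$ with $s\le t-1$ admits a nontrivial dependence $cw + \sum_{j=1}^{s} c_j x_j = 0$. If $c=0$, the dependence lives entirely inside $S_i$ and contradicts the induction hypothesis; if $c\ne 0$, then $w = -c^{-1}\sum_{j=1}^{s} c_j x_j$ exhibits $w$ as an element of $\tspan_t(S_i)$, contradicting the choice of $w$. So $S_{i+1}$ is again $t$-term linearly independent.

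For the quantitative bound, clearly $|\tspan_t(S_i)|\le |S_i|^{t-1}\cdot q^{t-1}$; substituting the prescribed $d$ from the theorem into $q^d$ gives a quantity that dominates $(n-1)^{t-1}q^{t-1}$ with plenty of margin (the exponent $(2t-1)$ in $d$ is in fact more generous than the $(t-1)$ one strictly needs), so the greedy step never fails before $|S_n|=n$. This yields the desired $S \coloneqq S_n$ of cardinality $n$, and the same inequality certifies that $|\mathbb F_q^d|$ remains polynomially bounded in $n$ and $q$.

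Finally, the running time is dominated by enumerating and storing $\tspan_t(S_i)$ at each stage, which can be done naively in time polynomial in $|S_i|^{t-1}q^{t-1}\cdot d$; summing over the $n$ stages and observing $|\mathbb F_q^d|\le q^{2t-1}n^{2t-1}$ gives total time polynomial in $n^t+q^t$, as claimed. No real conceptual obstacle arises here: the only delicate step is the inductive case split showing that forbidding $w$ from $\tspan_t(S_i)$ blocks every possible dependence of size at most $t$ in $S_{i+1}$, and the case split above handles this cleanly.
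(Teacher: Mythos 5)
Your proof is correct and is precisely the ``straightforward generalization of Theorem~\ref{thm:eSidon}'' that the paper invokes without writing out: extend $\tspan$ to linear combinations of $t-1$ vectors, run the same greedy argument, and observe the same inductive case split (coefficient of $w$ zero vs.\ nonzero) that underlies Lemma~\ref{lem:4linearindp}. You are also right that the exponent $2t-1$ appearing in the theorem's choice of $d$ is looser than the $t-1$ your counting actually requires (indeed for $t=4$ the theorem's $d$ is larger than the one used in Theorem~\ref{thm:eSidon} itself), which only makes the greedy step easier and the polynomial running-time bound in $n^t+q^t$ hold with room to spare.
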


    \section{Reduction from $(k,k-1)$-Gap Clique to $(q^{k},q^{k-1})$-Gap Clique}
    \label{sec:Clique-to-Gap-Clique}


    In this section, we first present a reduction from $k$-Clique, or equivalently $(k,k-1)$-Gap Clique, to $(q^{k},k\cdot q^{k-1})$-Gap Clique. This in fact already implies that the Maximum $k$-Clique problem admits no \FPT-approximation algorithm with approximation ratio $k^{o(1)}$, unless $\FPT= \W 1$. Then we explain how to modify our construction to get a reduction from $(k,k-1)$-Gap Clique to $(q^{k}, q^{k-1})$-Gap Clique. It is less transparent than the first reduction, but will enable us to obtain a lower bound in~\cite{LinRSW22-Clique} under \ETH.

    \subsection{The reduction}
    \label{subsec:reduction}

    Let $(G,k)$ be an instance of the multi-colored $k$-Clique problem. In
    particular,
    \begin{equation}\label{eq:VG}
        V(G)= \biguplus_{i\in [k]} V_i,
    \end{equation}
    and each $V_i$ is an independent set in $G$.
    We construct a graph $H$ as
    follows.
    \begin{itemize}
        \item Let $\mathbb F_q$ be a finite field where $q$ is to be determined
        later.
        Moreover, let $n \coloneqq |V(G)|$ and $d\coloneqq \ceil{\frac{3\log n}{\log
        q}+3}$.
        So, by \Cref{thm:eSidon}, we can assume without loss of
        generality that $V(G)\subseteq \columnset$ is a linear Sidon set of
        size $n$.

        \item For every $r\in \indexset$ we define \emph{the column with index $r$}
        as
        \[
            C_r\coloneqq \Big\{\prf_{r, \pi} \Bigmid \pi\in \columnset\Big\}.
        \]
        Here $\prf_{r, \pi}$ is a unique vertex associated with $r$ and $\pi$. Then we set
        \begin{equation}\label{eq:VH}
            V(H)\coloneqq \bigcup_{r\in \indexset} C_r.
        \end{equation}


        \item We still need to define the edge set $E(H)$ for the graph $H$.
        Let
        $r, r'\in \indexset$ and $\prf_{r, \pi}\in C_r$, $\prf_{r', \pi'}\in
        C_{r'}$.
        We distinguish the following cases.
        \begin{enumerate}
            \item[(H1)] If $r=r'$, then there is no edge between $\prf_{r, \pi}$
            and $\prf_{r', \pi'}$.

            \item[(H2)] 
            If $\dist(r,r')= 1$, say $r'= r+ ae_i$\footnote{Recall $e_i$ is unit vector defined as~\eqref{eq:ei}.} for some $a\in
            \mathbb F_q^*$ and $i\in [k]$.
            Then
            \begin{align}
                \label{eq:vertextest}
                \prf_{r,\pi}\prf_{r',\pi'}\in E(H)
                & \iff 
                \text{$\pi'- \pi= av$ for some $v\in V_i\subseteq V(G)$}.
            \end{align}
            That is, $\pi$ and $\pi'$ pass vertex test with the vertex $v$ as described in Section~\ref{sec:overview}.

            \item[(H3)]
            $a,b\in \mathbb F_q^*$ and distinct $i,j\in [k]$.
            Then
            \begin{align}
                \notag
                \prf_{r,\pi}\prf_{r',\pi'}\in E(H)
                \iff & 
                \text{$\pi'- \pi= au+ bv$} \\\label{eq:edgetest}
                & \quad \text{for some $u\in V_i$ and $v\in V_j$ with $uv\in E(G)$}.
            \end{align}
            Hence, $\pi$ and $\pi'$ pass the edge test with the edge $uv$.
            Note by~Definition~\ref{def:eSidon} the vertices $u$ and $v$, if
            exist, are unique.


            \item[(H4)] If $\dist(r,r')\ge 3$, then
            $\prf_{r,\pi}\prf_{r',\pi'}\in E(H)$.
        \end{enumerate}
    \end{itemize}

    \medskip
    \begin{lemma}\label{lem:firstHtime}
        The graph $H$ can be constructed in time polynomial in $|V(G)|+q^k$.
    \end{lemma}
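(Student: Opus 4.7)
The plan is to bound $|V(H)|$ by a polynomial in $n + q^k$ (writing $n \coloneqq |V(G)|$), and then to argue that adjacency can be decided per pair of vertices in polynomial time.

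First, I would invoke \Cref{thm:eSidon} to construct the linear Sidon set $V(G) \subseteq \mathbb F_q^d$ in time polynomial in $n + q$; the choice $d = \ceil{3\log n / \log q + 3}$ yields $q^d \le q^4 n^3$. Since $|V(H)| = q^k \cdot q^d \le q^{k+4}\, n^3$, the vertex count is polynomial in $n + q^k$, and $V(H) = \bigcup_{r \in \mathbb F_q^k} C_r$ can be written down within the same bound.

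Next, for each unordered pair $\{\prf_{r,\pi}, \prf_{r',\pi'}\}$ I would decide adjacency by case analysis following (H1)--(H4). Cases (H1) and (H4) reduce to computing the Hamming distance between $r$ and $r'$ in $O(k)$ time. For (H2), once the unique coordinate $i$ and scalar $a \in \mathbb F_q^*$ with $r' = r + a e_i$ are identified, the test becomes whether $a^{-1}(\pi' - \pi) \in V_i$, which is a single hash-table lookup after an $O(n)$ preprocessing of each color class $V_i$. For (H3), after identifying the two differing coordinates $i, j$ and scalars $a, b$, I would iterate $u \in V_i$, compute the candidate $v \coloneqq b^{-1}(\pi' - \pi - au)$, and check whether $v \in V_j$ and $uv \in E(G)$; by \Cref{def:eSidon} at most one witness $(u,v)$ exists, so the loop terminates at the first success, and each individual check is polynomial.

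Combining these, the total running time is at most $|V(H)|^2 \cdot \poly(n)$, which is polynomial in $n + q^k$. The lemma is essentially a routine counting-plus-case-analysis argument, so I do not anticipate any real obstacle; the only mildly delicate point is (H3), where the linear Sidon property of $V(G)$ is precisely what keeps the witness search tractable (and one could replace the linear scan over $V_i$ by a single lookup keyed on $a^{-1}(\pi'-\pi - bv)$ if one wishes to shave it further).
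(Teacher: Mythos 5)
Your proposal is correct and follows the same route as the paper's proof: first identify $V(G)$ with a linear Sidon set via \Cref{thm:eSidon}, bound $|V(H)| = q^k q^d \le q^{k+4}|V(G)|^3$, and then decide adjacency for each pair by a case analysis on $\dist(r,r')$. You simply spell out the per-pair adjacency checks (hash lookup for (H2), witness search for (H3)) more explicitly than the paper, which merely says these reduce to simple field arithmetic plus lookups in $E(G)$; the extra detail is harmless and, if anything, a useful clarification.
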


    \begin{proof}
    First, we identify $V(G)$ with a linear Sidon set $S\subseteq \mathbb F_q^d$. By~Theorem~\ref{thm:eSidon}, this can be done in time polynomial in $|V(G)|+ q$. Then we construct the vertex set of $H$ as~\eqref{eq:VH}, which takes time linear in the size of $V(H)$. Note
    \[
    |V(H)|= 
    |\mathbb F_q^k|\cdot |\mathbb F_q^d|= q^k\cdot q^d\le q^{k+4}\cdot |V(G)|^3.  
    \]
    Finally to construct the edge set $E(G)$ we go through each pair of vertices in $V(H)$ and check the conditions in (H1) -- (H4), which requires firstly some simple arithmetic in $\mathbb F_q^k$ and $\mathbb F_q^d$, and then checking the edge set $V(G)$. Thus, it can be done again in time polynomial in $q^k+ |V(G)|$.
    \end{proof}

    \medskip

    \subsection{The completeness}
    \label{subsec:completeness}

    \begin{lemma}
        \label{lem:completeness}
        If $G$ has a $k$-clique, then $H$ has a clique of size $q^k$.
    \end{lemma}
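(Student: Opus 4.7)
The plan is to exhibit an explicit clique of size $q^k$ in $H$, using exactly one vertex from each column $C_r$, chosen according to the network-coding encoding from the Overview.

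Concretely, let $v_1, \ldots, v_k$ be a $k$-clique in $G$ with $v_i \in V_i$ for each $i \in [k]$ (such a partition exists since $G$ is multi-colored). For every $r = (r_1, \ldots, r_k) \in \mathbb F_q^k$, define
\[
\pi_r \coloneqq r_1 v_1 + r_2 v_2 + \cdots + r_k v_k \in \mathbb F_q^d,
\]
and consider the candidate set $K \coloneqq \{\prf_{r,\pi_r} : r \in \mathbb F_q^k\} \subseteq V(H)$. Since the $q^k$ indices $r$ are pairwise distinct and each column contains exactly one chosen vertex, we have $|K| = q^k$. It remains to verify that any two distinct elements of $K$ are joined by an edge in $H$, which I will do by splitting on $\dist(r, r')$ according to the cases (H1)--(H4).

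The key calculation is that for any two distinct $r, r' \in \mathbb F_q^k$,
\[
\pi_{r'} - \pi_r = \sum_{i \in [k]} (r'_i - r_i)\, v_i = \sum_{i \in \diff(r,r')} (r'_i - r_i)\, v_i.
\]
If $\dist(r, r') = 1$, say $r' = r + a e_i$ with $a \in \mathbb F_q^*$, then $\pi_{r'} - \pi_r = a v_i$ with $v_i \in V_i$, which is exactly the condition \eqref{eq:vertextest} for the vertex test, so $\prf_{r,\pi_r}\prf_{r',\pi_{r'}} \in E(H)$. If $\dist(r, r') = 2$, say $r$ and $r'$ differ on positions $i \ne j$ with $r'_i - r_i = a$ and $r'_j - r_j = b$ (both in $\mathbb F_q^*$), then $\pi_{r'} - \pi_r = a v_i + b v_j$; and since $v_i v_j \in E(G)$ (because $v_1, \ldots, v_k$ form a $k$-clique), the edge test \eqref{eq:edgetest} is met and the edge is again in $H$. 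Finally, if $\dist(r,r') \ge 3$, case (H4) immediately gives the edge. Case (H1), $r = r'$, does not arise between distinct members of $K$.

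There is no genuine obstacle here; the proof is essentially a book-keeping verification that the linear encoding $\pi_r$ passes both the vertex test and the edge test. The only subtle point to mention is that the $v_i$ appearing in the decoded difference $\pi_{r'} - \pi_r$ is indeed the \emph{unique} vertex/edge witness demanded by \eqref{eq:vertextest} and \eqref{eq:edgetest}; this uniqueness is guaranteed by the linear Sidon property of $V(G) \subseteq \mathbb F_q^d$ (Definition~\ref{def:eSidon}), but for completeness we only need the existence direction, which is immediate from the construction of $\pi_r$.
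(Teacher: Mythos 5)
Your proof is correct and follows essentially the same approach as the paper: define $\pi_r$ as the linear combination encoding the $k$-clique, take one vertex per column, and verify adjacency by casing on $\dist(r,r')$ using the vertex test, the edge test, and (H4). The only cosmetic difference is that you write the general difference formula $\pi_{r'} - \pi_r = \sum_{i \in \diff(r,r')} (r'_i - r_i) v_i$ once and then specialize, whereas the paper recomputes it coordinatewise in each case.
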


    \begin{proof}
        Let $\{v_1, \ldots, v_k\}\subseteq V(G)\subseteq \columnset$ be a
        $k$-clique in $G$.
        Then for every $r\in \indexset$ we define
        \[
            \pi_r\coloneqq \left(\sum_{i\in [k]} r[i]v_i[1], \ldots, \sum_{i\in [k]} r[i]v_i[d]\right)
            \in \columnset.
        \]
        We claim that
        \[
            K\coloneqq \big\{\prf_{r, \pi_r} \bigmid r\in \indexset\big\}
        \]
        is clique in $H$.
        Assume $r,r'\in \indexset$ with $r\ne r'$.
        We need to show
        $\prf_{r, \pi_r}$ and $\prf_{r', \pi_{r'}}$ are adjacent in $H$.
        \begin{itemize}
            \item {\bf Case 1:} $\dist(r,r')=1$, say $r'= r+ae_i$ for some $a\in \mathbb F_q^*$ and
            $i\in [k]$.
            Then
            \begin{align*}
                \pi_{r'}- \pi_{r}
                & = \left(\sum_{\ell\in [k]} r'[\ell]v_{\ell}[1], \ldots, \sum_{\ell\in [k]} r'[\ell]v_{\ell}[d]\right)
                - \left(\sum_{\ell\in [k]} r[\ell]v_{\ell}[1], \ldots, \sum_{\ell\in [k]} r[\ell]v_{\ell}[d]\right) \\
                & = \left(\sum_{\ell\in [k]} (r'[\ell]-r[\ell])v_{\ell}[1], \ldots, \sum_{\ell\in [k]} (r'[\ell]-r[\ell])v_{\ell}[d]\right) \\
                & = \big((r'[i]-r[i])v_i[1], \ldots, (r'[i]-r[i])v_i[d]\big)
                = \big(av_i[1], \ldots, av_i[d]\big) \\
                & = a\big(v_i[1], \ldots, v_i[d]\big) = a v_i.
            \end{align*}
            We are done by~\eqref{eq:vertextest}.

            \item {\bf Case 2:} $\dist(r,r')= 2$, in particular $r'= r+ ae_i+ be_j$ for some $a,b\in
            \mathbb F_q^*$ and $i,j\in [k]$ with $i\ne j$.
            It follows that
            \begin{align*}
                & \pi_{r'}- \pi_{r} \\
                = & \left(\sum_{\ell\in [k]} (r'[\ell]-r[\ell])v_{\ell}[1], \ldots, \sum_{\ell\in [k]} (r'[\ell]-r[\ell])v_{\ell}[d]\right) \\
                = & \big((r'[i]-r[i])v_i[1]+ (r'[j]-r[j])v_j[1],
                \ldots, (r'[i]-r[i])v_i[d]+(r'[j]-r[j])v_j[d]\big) \\
                = & a\big(v_i[1], \ldots, v_i[d]\big) + b\big(v_j[1], \ldots, v_j[d]\big)
                = a v_i+ b v_j.
            \end{align*}
            So~\eqref{eq:edgetest} implies that $\prf_{r, \pi_r} \prf_{r',
            \pi_{r'}}\in E(H)$.

            \item {\bf Case 3:} $\dist(r,r')\ge 3$. This is trivial by (H4).
        \end{itemize}
        Clearly, $|K|= q^k$.
        This finishes our proof.
    \end{proof}

    \medskip

    \subsection{The soundness}
    \label{subsec:soudness}

    \begin{lemma}
        \label{lem:soundness}
        If $G$ has no $k$-clique, then $H$ has no clique of size
        \[
            k\cdot q^{k-1}+1.
        \]
    \end{lemma}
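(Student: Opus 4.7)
The plan is to bound the size of any clique $K$ in $H$ when $G$ has no $k$-clique. First, observe that by rule (H1) no two vertices of $K$ lie in the same column, so if we set
\[
R \coloneqq \big\{r\in \indexset \bigmid K\cap C_r\ne \emptyset\big\},
\]
then $|K|=|R|$ and we may write $K=\{\prf_{r,\pi_r}: r\in R\}$ for uniquely determined $\pi_r\in \columnset$. The goal is to prove $|R|\le k\cdot q^{k-1}$.

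The core lemma to establish is the following \emph{escape property}: for every $r\in R$, there is a direction $\phi(r)\in [k]$ such that $r+ae_{\phi(r)}\notin R$ for all $a\in \mathbb F_q^*$. I would prove this by contrapositive: if no such direction exists, then for every $i\in [k]$ one can pick $a_i\in \mathbb F_q^*$ with $r^i\coloneqq r+a_ie_i\in R$. Rule (H2) applied to the edge between $\prf_{r,\pi_r}$ and $\prf_{r^i,\pi_{r^i}}$ yields a (unique) $v_i\in V_i$ with $\pi_{r^i}-\pi_r=a_iv_i$. Now for any $i\ne j$, one has $\dist(r^i,r^j)=2$ and
\[
\pi_{r^j}-\pi_{r^i}=(\pi_{r^j}-\pi_r)-(\pi_{r^i}-\pi_r)= (-a_i)v_i+a_jv_j.
\]
Since $v_i\in V_i$, $v_j\in V_j$ and these classes are disjoint, the linear Sidon property makes $v_i,v_j$ the unique witnesses, so rule (H3) forces $v_iv_j\in E(G)$. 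Thus $\{v_1,\ldots,v_k\}$ is a $k$-clique in $G$, contradicting our hypothesis.

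Given $\phi\colon R\to [k]$ as above, the counting argument is clean. For each direction $i\in [k]$, partition $\indexset$ into the $q^{k-1}$ affine lines parallel to $e_i$. On any such line, if two distinct elements $r,r'\in R$ both have $\phi$-value $i$, then $r'=r+ae_i$ for some $a\in \mathbb F_q^*$, contradicting the escape property for $r$. Hence each line contains at most one $r$ with $\phi(r)=i$, giving $|\phi^{-1}(i)|\le q^{k-1}$, and summing over $i$ yields $|R|=\sum_{i\in [k]} |\phi^{-1}(i)|\le k\cdot q^{k-1}$, as required.

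The main obstacle is just the escape-property argument: one must be careful that the vertex-test identification in (H2) produces unique $v_i$ (which follows since $a_i\ne 0$), and that in (H3) the pair $(v_i,v_j)$ extracted from $\pi_{r^j}-\pi_{r^i}$ really is the pair already obtained via (H2). The latter is precisely what the linear Sidon hypothesis is designed to ensure; once this uniqueness is in place, everything reduces to a straightforward line-counting bound.
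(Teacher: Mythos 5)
Your escape-property argument matches the paper's Claim~1 essentially verbatim: assume for contradiction that for every $i\in[k]$ some $r^i=r+a_ie_i$ lies in $R$, extract $v_i\in V_i$ from the vertex test (H2), and use the edge test (H3) together with the linear Sidon property to certify $v_iv_j\in E(G)$, producing a forbidden $k$-clique. Where you depart from the paper is the counting. The paper defines, for each $r\in R$, the set $T_r=\{r+ae_{i_r}: a\in\mathbb F_q^*\}$ of $q-1$ ``killed'' indices, proves as a separate Claim~2 that any fixed $r\in\mathbb F_q^k\setminus R$ can lie in at most $k$ distinct sets $T_{r'}$, and then combines $R\uplus\bigcup_{r\in R}T_r\subseteq\mathbb F_q^k$ with that overlap bound to obtain $|R|\bigl(1+(q-1)/k\bigr)\le q^k$, hence $|R|\le kq^{k-1}$. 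You instead fix a direction $i\in[k]$, partition $\mathbb F_q^k$ into the $q^{k-1}$ affine lines parallel to $e_i$, and note that the escape property forbids two elements of $\phi^{-1}(i)$ from sharing a line; summing $|\phi^{-1}(i)|\le q^{k-1}$ over $i$ gives the bound directly. Your route dispenses with the auxiliary overlap claim and the accompanying inequality chain, and is the more streamlined of the two; the two arguments are nonetheless closely related, since the per-direction line partition is precisely what makes the paper's Claim~2 hold.
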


    \begin{proof}
        Assume that $K\subseteq V(H)$ is a clique in $G$.
        For the latter purpose,
        let
        \[
            R\coloneqq \big\{r\in \indexset \bigmid K\cap C_r\ne \emptyset\big\}.
        \]
        Then by (H1), we have
        \[
            |K\cap C_r|= 1
        \]
        for every $r\in R$, and otherwise $|K\cap C_r|= 0$ for every $r\in
        \indexset\setminus R$.
        In the former case, we use
        \begin{equation}
            \label{eq:Pirpir}
            \prf_{r, \pi_r}
        \end{equation}
        to denote the \emph{unique} element in $K\cap C_r$.
        Thereby $\pi_r\in
        \columnset$.

        \medskip
        \begin{claim} Let $r\in R$.
        Then there exists an $i\in  [k]$
        such that for all $a\in \mathbb F^*_q$ we have
        \[
            r+ ae_i\notin R.
        \]
        \end{claim}
        
        \medskip
        \begin{proof}[Proof of the claim.] Towards a contradiction we assume that
        for every $i\in [k]$ there is an $a_i\in \mathbb F^*_q$ with
        \[
            r^i\coloneqq r+ a_i e_i\in R.
        \]
        By~\eqref{eq:Pirpir} there is a unique
        \[
            \prf_{r^i, \pi_{r^i}}\in K.
        \]
        For the same reason, we have a unique $\prf_{r, \pi_r}\in K$.
        As $K$ is a
        clique, there is an edge between $\prf_{r, \pi_r}$ and $\prf_{r^i,
        \pi_{r^i}}$ in the graph $H$.
        By (H2) we have a vertex $v_i\in V_i$ in the
        original graph $G$ such that
        \[
            \pi_{r^i}- \pi_{r}= a_{i} v_{i}.
        \]
        Now we show $\{v_1, \ldots, v_k\}$ is a $k$-clique in $G$, contradicting our
        assumption that $G$ has no $k$-clique.
        So we need to demonstrate that
        $v_{i}v_{j}\in E(H)$ for every distinct $i,j\in [k]$.
        Observe that
        \[
            \pi_{r^i}- \pi_{r^j}
            =(\pi_{r^i}- \pi_{r}) - (\pi_{r^j}- \pi_{r})
            = a_{i}v_{i} - a_{j}v_{j} = a_{i}v_{i} + (-a_{j})v_{j}.
        \]
        By (H3) we conclude $v_{i}v_{j}\in E(G)$ as desired. Let us emphasize that $V(G)$ is a linear Sidon set, hence the above $v_i$ and $v_j$ are uniquely determined. \yijia{new sentence according to Reviewer C.}
        \end{proof}

        \medskip
        Of course for each $r\in R$ we might have more than one $i\in [k]$ satisfying
        the above claim.
        Nevertheless, we fix an arbitrary one and denote it by
        $i_r$.
        Then we define
        \begin{align}
            \label{eq:Tr}
            T_r & \coloneqq \Big\{r+a\cdot e_{i_r}\in \indexset \Bigmid a\in \mathbb F^*_q\Big\}.
        \end{align}
        Note $|T_r|=q-1$ and $T_r\cap R= \emptyset$.

        \medskip
        \begin{claim}
        Every $r\in \indexset \setminus R$ can occur in
        at most $k$ many different $T_{r'}$ for $r'\in R$.
        More precisely,
        \[
            \big|\{r'\in R\mid r\in T_{r'}\}\big|\le k.
        \]
        \end{claim}

        \medskip
        \begin{proof}[Proof of the claim.] Assume $r\in T_{r'}$ and $r\in
        T_{r''}$ for distinct $r', r''\in R$.
        We show
        \[
            i_{r'}\ne i_{r''},
        \]
        which immediately implies the claim by $i_{r'}, i_{r''}\in [k]$.
        By assumption for some $a, b\in \mathbb F^*_q$ we have
        \begin{eqnarray*}
            r= r'+a e_{i_{r'}}
            & \text{and} &
            r= r''+b e_{i_{r''}}.
        \end{eqnarray*}
        So if $i_{r'}= i_{r''}$ we would have
        \[
            r''= r'+(a-b) e_{i_{r'}}.
        \]
        Note $a-b\ne 0$ as $r'\ne r''$.
        Then by our definition of $T_{r'}$ (i.e.,
        ~\eqref{eq:Tr} where $r\mapsto r'$) we conclude
        \[
            r''\in T_{r'}.
        \]
        On the other hand, $T_{r'}\cap R= \emptyset$, contradicting $r''\in R$.
        \end{proof}

        \medskip
        Now, let us continue the proof of \Cref{lem:soundness}. Putting all the pieces together, we have
        \begin{align*}
            & R\uplus \bigcup_{r\in R} T_r\subseteq \indexset & \text{\big(by $T_r\cap R= \emptyset$ for every $r\in R$\big)} \\
            & \Longrightarrow\  |R|+ \left|\bigcup_{r\in R} T_r\right| \le q^k \\
            & \Longrightarrow \ |R|+ \frac{|R|\cdot (q-1)}{k} \le q^k
            & \text{\big(by $|T_r|=q-1$ and Claim~2\big)} \\
            & \Longrightarrow \frac{|R|\cdot (q-1+k)}{k} \le  q^k \\
            & \Longrightarrow \frac{|R|\cdot q}{k}\le  q^k
            & \text{\big(by $k\ge 1$\big)} \\
            & \Longrightarrow |R|\le k\cdot q^{k-1}.
        \end{align*}
        This finishes the proof.
    \end{proof}

    \subsection{Improvement to $(q^{k},q^{k-1})$-Gap Clique}\label{subsec:improve}
    Again we start from a multi-colored $k$-Clique instance $(G, k)$ with $V(G)$ satisfying ~\eqref{eq:VG} and modify the construction of $H$ in~Subsection~\ref{subsec:reduction} as follows. 

        \begin{itemize}
        \item We identify the vertex $V(G)$ with a $8$-term linearly independent set in $\mathbb F_q^d$ as stated in Theorem~\ref{thm:teSidon}.

        \item Again the vertex set of $H$ is
        $V(H)= \bigcup_{r\in \indexset} C_r$,
        where $C_r= \Big\{\prf_{r, \pi} \Bigmid \pi\in \columnset\Big\}$ for every $r\in \indexset$.

        \item For the edge set $E(H)$ of $H$, let
        $r, r'\in \indexset$ and $\prf_{r, \pi}\in C_r$, $\prf_{r', \pi'}\in
        C_{r'}$. Set $t:= \dist(r, r')$. There is an edge between $\prf_{r, \pi}$ and $\prf_{r', \pi'}$ if and only if one of the following conditions is satisfied.
        \begin{itemize}
            \item {\bf Case 1.} $1\le t\le 4$. Assume $\diff(r,r')= \{i_1, \ldots, i_t\}$, then
            \[
            \pi- \pi' 
            = \sum_{j\in [t]} \big(r[i_j]- r'[i_j]\big)v_j,
            \]
            where $v_1\in V_{i_1}, \ldots, v_t\in V_{i_t}$ and $\{v_1, \ldots, v_t\}$ is a $t$-clique in $G$. Observe that, as $V(G)$ is $8$-term linearly independent and $t\le 4$, it is easy to see that the vertices $v_1, \ldots, v_t$ are unique. For the later purpose, we write
            \begin{equation}\label{eq:vs}
            \vs_{r,r'}(\pi, \pi'):=
             \{v_1, \ldots, v_t\}.
            \end{equation}
            
            \item {\bf Case 2.} $t\ge 5$.
        \end{itemize}
    \end{itemize}
    We remark that the above graph $H$ is an induced subgraph of our original $H$. Exactly as Lemma~\ref{lem:firstHtime} we can show:

    \medskip
    \begin{lemma}\label{lem:Htime}
        The graph $H$ can be constructed in time polynomial in $|V(G)|+q^k$.
    \end{lemma}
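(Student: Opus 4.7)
The plan is to follow exactly the template of Lemma~\ref{lem:firstHtime}, adjusting only the few steps where the richer edge conditions in Case~1 demand extra care.

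First, I would identify $V(G)$ with an $8$-term linearly independent subset $S\subseteq \mathbb F_q^d$ by invoking Theorem~\ref{thm:teSidon} with $t=8$. That theorem gives such an $S$ with $|S|=|V(G)|$ and $d= O\!\left(\frac{\log |V(G)|}{\log q}+1\right)$ in time polynomial in $|V(G)|^8+ q^8$, which is polynomial in $|V(G)|+q^k$ (since $q\le q^k$).

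Next, I would enumerate the vertex set $V(H)=\bigcup_{r\in \mathbb F_q^k} C_r$. As in the proof of Lemma~\ref{lem:firstHtime}, we have $|V(H)|= q^k\cdot q^d\le q^k\cdot q^{O(1)}\cdot |V(G)|^{O(1)}$, so writing down $V(H)$ takes time polynomial in $|V(G)|+q^k$.

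It remains to construct $E(H)$. For each unordered pair of distinct vertices $\prf_{r,\pi}, \prf_{r',\pi'}\in V(H)$, I would first compute $t\coloneqq \dist(r,r')$ together with the set $\diff(r,r')=\{i_1,\ldots,i_t\}$ in $O(k)$ time. If $t\ge 5$, insert the edge by Case~2. If $t=0$, insert no edge. If $1\le t\le 4$, decide by brute force whether the decomposition of Case~1 exists: iterate over all tuples $(v_1,\ldots,v_t)\in V_{i_1}\times \cdots\times V_{i_t}$ (at most $|V(G)|^t\le |V(G)|^4$ many), and for each test the constant-time linear identity $\pi-\pi'= \sum_{j\in [t]}(r[i_j]-r'[i_j])v_j$ together with the $O(t^2)$-time clique check in $G$. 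Insert the edge iff some tuple passes; by $8$-term linear independence at most one such tuple can pass, but we do not even need uniqueness to decide existence.

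The total running time is bounded by $|V(H)|^2\cdot (|V(G)|^4+O(k))$, which is polynomial in $|V(G)|+q^k$. The only place where one might worry is the Case~1 enumeration, but since $t$ is capped at $4$ the exponent stays constant, so this step does not break the polynomial bound; everything else is a verbatim adaptation of Lemma~\ref{lem:firstHtime}.
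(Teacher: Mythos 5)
Your proof is correct and follows the same approach as the paper, which simply states that Lemma~\ref{lem:Htime} is proved ``exactly as Lemma~\ref{lem:firstHtime}''; you have filled in the straightforward adaptation (invoking Theorem~\ref{thm:teSidon} with $t=8$, and a brute-force tuple search over at most $|V(G)|^4$ candidates for the Case~1 edge test) and all steps are sound.
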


    Now we are ready to prove the gap between $q^k$ and $q^{k-1}$. 
    \begin{lemma}\label{lem:qgap}
    \begin{enumerate}
        \item {\bf Completeness.} If $G$ has a $k$-clique, then $H$ has a clique of size $q^k$.

        \item {\bf Soundness.} If $G$ has no $k$-clique, then $H$ has no clique of size $q^{k-1}+1$.
    \end{enumerate}
    \end{lemma}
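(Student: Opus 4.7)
For \emph{completeness}, we follow the construction of Lemma~\ref{lem:completeness}: given a $k$-clique $\{v_1,\ldots,v_k\}\subseteq V(G)$ with $v_i\in V_i$, set $\pi_r := \sum_{i\in [k]} r[i]\cdot v_i$ for every $r\in \indexset$ and define $K := \{\prf_{r,\pi_r}: r\in \indexset\}$. For any two distinct $r,r'\in \indexset$ with $t:=\dist(r,r')$, a direct computation gives $\pi_{r'}-\pi_r = \sum_{i\in \diff(r,r')}(r'[i]-r[i])v_i$. When $t\le 4$, the set $\{v_i : i\in \diff(r,r')\}$ is a $t$-clique in $G$ (as a subclique of $\{v_1,\ldots,v_k\}$), so the Case~1 edge condition is met; when $t\ge 5$, Case~2 supplies the edge for free. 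Hence $K$ is a clique in $H$ with $|K|= q^k$.

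For \emph{soundness}, we argue contrapositively: assume $H$ has a clique $K$ of size at least $q^{k-1}+1$ and we will exhibit a $k$-clique in $G$. Let $R\subseteq \indexset$ and $\pi_r\in \columnset$ be as in Lemma~\ref{lem:soundness}. For each $i\in [k]$, define $A_i := \{r\in R : r+a\cdot e_i\in R \text{ for some } a\in \mathbb{F}_q^*\}$. If $A_i=\emptyset$ for some $i$, then each axis-parallel line in direction $e_i$ meets $R$ in at most one element, so the projection of $R$ onto the other $k-1$ coordinates is injective and $|R|\le q^{k-1}$, contradicting our assumption. Hence $A_i\ne \emptyset$ for every $i$. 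Leveraging the distance-$\le 4$ tests together with the $8$-term linear independence of $V(G)$, we next show that for each $i$ there is a \emph{unique} $v_i\in V_i$ with $\pi_{r+a\cdot e_i}-\pi_r = a\cdot v_i$ whenever $r,r+a\cdot e_i\in R$. The key tool is a $4$-point configuration: for two such pairs $(r,r+a e_i)$ and $(r',r'+b e_i)$ with $\dist(r,r')\le 3$, we expand $\pi_{r'+be_i}-\pi_{r+ae_i}$ directly via the distance-$\le 4$ test on that pair and by telescoping through $\pi_r,\pi_{r'}$; the $8$-term linear independence then rules out any discrepancy between the two $V_i$-components, and repeated comparisons propagate the consistency across $R$.

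With the globally-defined $v_1,\ldots,v_k$, we conclude $\{v_1,\ldots,v_k\}$ forms a $k$-clique in $G$ as follows: for each pair $i\ne j$, we locate a distance-$\le 4$ pair $(p,q)\in R\times R$ whose symmetric difference $\diff(p,q)$ contains both $i$ and $j$ (this requires $A_i\cap A_j\ne \emptyset$ or a suitable analogue, which we obtain by extending the consistency argument of the previous step), and then the distance-$\dist(p,q)$ test on $(p,q)$ decodes $\pi_q-\pi_p$ as $\sum_{\ell\in \diff(p,q)}(q[\ell]-p[\ell])w_\ell$ with $\{w_\ell\}$ a clique in $G$; uniqueness of the decoding forces $w_i=v_i$ and $w_j=v_j$, so $v_iv_j\in E(G)$. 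This contradicts the hypothesis that $G$ has no $k$-clique. The principal technical obstacle lies in the global consistency of the $v_i$'s together with the existence of the witnessing pair $(p,q)$: both require bridging configurations in $R$ at large Hamming distance, which we secure by exploiting the density $|R|>q^{k-1}$ in combination with the $8$-term linear independence that precludes spurious coincidences among decoded vertices.
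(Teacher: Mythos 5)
Your completeness argument is correct and coincides with the paper's. For soundness, however, you take a genuinely different route, and the two places you yourself flag as ``the principal technical obstacle'' are in fact unresolved gaps rather than routine details.

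The paper argues directly (not contrapositively) and entirely \emph{locally}: fix any $r\in R$ and suppose, for contradiction, that for every $i\in[k]$ there is some $r^i\in R$ with $\dist(r,r^i)\le 2$ and $i\in\diff(r,r^i)$. The decoded vertices $v_i\in\vs_{r,r^i}(\pi_r,\pi_{r^i})\cap V_i$ then form a $k$-clique in $G$, because for any $i<j$ one has $\dist(r^i,r^j)\le 4$ and $\{i,j\}$ is contained in one of $\diff(r^i,r^j)$, $\diff(r,r^i)$, $\diff(r,r^j)$, which together with the $8$-term linear independence forces the adjacency $v_iv_j\in E(G)$. So under the no-$k$-clique hypothesis, every $r\in R$ has a ``free direction'' $i_r$ whose entire radius-$2$ slab misses $R$; letting $T_r=\{r+ae_{i_r}:a\in\mathbb F_q^*\}$, these $T_r$ are pairwise disjoint and disjoint from $R$ (again by the free-direction property), and the inclusion $R\uplus\bigcup_r T_r\subseteq\mathbb F_q^k$ gives $|R|\cdot q\le q^k$. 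Nothing global about $R$ is ever needed.

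Your plan, by contrast, needs two global facts neither of which is established: (i) every pair $(r,r+ae_i)\subseteq R$ decodes to the \emph{same} $v_i\in V_i$, and (ii) for each $i\ne j$ there is a pair $(p,q)\in R\times R$ with $\dist(p,q)\le 4$ and $\{i,j\}\subseteq\diff(p,q)$. For (i), your $4$-point comparison only applies when the two base points are at Hamming distance at most $3$; ``repeated comparisons propagate the consistency across $R$'' presupposes that all direction-$i$ pairs in $R$ are linked by a chain of short-distance comparisons, which is a connectivity property of $R$ that $|R|>q^{k-1}$ alone does not guarantee — $R$ could split into several far-apart patches each decoding to a different element of $V_i$. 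For (ii), the existence of a close witnessing pair differing in both $i$ and $j$ is likewise not a consequence of density alone (and even if such a pair exists, identifying its decoded vertices with the $v_i,v_j$ of step (i) again rests on the unproven global consistency). Because the paper's local free-direction argument sidesteps both obstructions, I would recommend redoing the soundness proof along those lines; as written, the soundness half of your proposal is a sketch of an approach whose central steps are not justified and appear not to follow from the stated hypotheses.
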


    \begin{proof}
    The completeness case follows the same line as Lemma~\ref{lem:completeness}. For the soundness, some extra work is needed along the line of the proof of Lemma~\ref{lem:soundness}. 
    Assume that $G$ does not have a $k$-clique, and consider any clique $K\subseteq V(H)$ in $H$. Again, let $R \coloneqq \big\{r\in \indexset \bigmid K\cap C_r\ne \emptyset\big\}$.

    \begin{claim}
    Let $r\in R$.
    Then there exists an $i\in [k]$
    such that for all $r'\in \indexset$ with $\dist(r, r')\le 2$ and $i\in \diff(r, r')$ we have
    \[
    r'\notin R.
    \]
    \end{claim}

    \begin{proof}[Proof of the claim.] Assume that for every $i\in [k]$ there is an $r^i\in R$ with
    \begin{eqnarray}\label{eq:rri}
        \dist(r,r^i)\le 2 & \text{and} & i\in \diff(r, r^i).
    \end{eqnarray}
    Fix such an $r^i$. 
    Since $r^i\in R$, there is a unique $\prf_{r^i, \pi_{r^i}}\in K$. Similarly we have a unique $\prf_{r, \pi_r}\in K$. It follows that $\vs_{r, r^i}(\pi_r, \pi_{r^i})$ as defined in~\eqref{eq:vs}, which has size at most $2$, contains a vertex
    \[
    v_i\in V_i.
    \]
    Now let $1\le i< j\le k$. Then by~\eqref{eq:rri} we have $\dist(r^i,r^j)\le 4$. Furthermore, it is routine to verify that either $\{i,j\}\subseteq \diff(r^i, r^j)$, or $\{i,j\}\subseteq \diff(r, r^i)$, or $\{i,j\}\subseteq \diff(r, r^j)$, and all guarantee that $v_iv_j\in E(G)$ by the fact that~\eqref{eq:vs} is a clique. Thus $v_1, \ldots, v_k$ induce a $k$-clique in $G$, which is a contradiction.
    \end{proof}

    \medskip
    For each $r\in R$ we fix an $i_r\in [k]$ satisfying Claim~3\yijia{old: Claim~1, reviewer B} and let $T_r \coloneqq \Big\{r+a\cdot e_{i_r}\in \indexset \Bigmid a\in \mathbb F^*_q\Big\}$, the same as~\eqref{eq:Tr}.

    \begin{claim}
    Every $r\in \indexset \setminus R$ can occur in at most \emph{one} $T_{r'}$ for $r'\in R$, i.e.,
    \[
        \big|\{r'\in R\mid r\in T_{r'}\}\big|\le 1.
    \]
    \end{claim}

    \begin{proof}[Proof of the claim.]
    Towards a contradiction, assume $r\in T_{r'}\cap 
    T_{r''}$ for two distinct $r', r''\in R$. Hence 
    \begin{eqnarray*}
        \diff(r',r)= \{i_{r'}\}
        & \text{and} &
        \diff(r'',r)= \{i_{r''}\}.
    \end{eqnarray*}
    Since $r'\ne r''$, we conclude
    \begin{eqnarray*}
    \diff(r',r'')= \big\{i_{r'}, i_{r''}\big\},
     & \text{and thus} &
    \dist(r', r'')\le 2.
    \end{eqnarray*}
    As $r'\in R$ and $i_{r'}\in \diff(r', r'')$, Claim~1 implies that $r''\notin R$. This is the desired contradiction.
    \end{proof}

    Finally, we conclude the proof for the soundness case of \Cref{lem:qgap}. The above two claims imply the followings.
    \begin{align*}
    R\uplus \bigcup_{r\in R} T_r\subseteq \indexset 
    & \Longrightarrow\  |R|+ \left|\bigcup_{r\in R} T_r\right| \le q^k \\
    & \Longrightarrow \ |R|+ |R|\cdot (q-1) \le q^k
        & \text{\big(by $|T_r|=q-1$ and Claim~4\big)} \\
    & \Longrightarrow |R|\cdot q \le  q^k \Longrightarrow |R|\le q^{k-1}. & \qedhere
    \end{align*}\yijia{Old: Claim~2, reviewer B}
    \end{proof}

    \medskip
    \begin{theorem}\label{thm:gapreduction}
    There is an algorithm (i.e., a reduction) $\mathbb R$ that on an input graph $G$, $k\ge 1$, and a prime power $q$, computes a graph $H= H(G,k,q)$ satisfying the following conditions.
    \begin{enumerate}
        \item[(R1)] If $G$ has a $k$-clique, then $H$ has a clique of size $q^k$.

        \item[(R2)] If $G$ does not have a $k$-clique, then $H$ has \emph{no}
        clique of size $q^{k-1}+1$.
    \end{enumerate}
    Moreover, $\mathbb R$ runs in time polynomial in $|V(G)|+ q^k$.
    \end{theorem}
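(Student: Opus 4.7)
The plan is to observe that Theorem~\ref{thm:gapreduction} is essentially a packaging of the results already developed in Subsection~\ref{subsec:improve}, together with a small preprocessing step that handles general (non-multi-colored) input graphs. First, I would reduce the input instance $(G, k)$ to a multi-colored $k$-Clique instance $(G', k)$ in the standard way: take $k$ disjoint copies $V_1, \ldots, V_k$ of $V(G)$, and insert an edge between $u \in V_i$ and $v \in V_j$ (with $i \ne j$) exactly when the corresponding pair of original vertices is adjacent in $G$. Each $V_i$ is then an independent set, so $G'$ fits the form~\eqref{eq:VG} demanded by the construction; moreover it is immediate that $G$ has a $k$-clique if and only if $G'$ has a (multi-colored) $k$-clique, and $|V(G')| = k\cdot |V(G)|$, so the blowup is negligible compared to the parameters we care about.

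Second, I would define the output of $\mathbb{R}$ on input $(G, k, q)$ to be the graph $H = H(G', k, q)$ produced by the construction of Subsection~\ref{subsec:improve} applied to $(G', k, q)$. Condition (R1) then follows directly from Lemma~\ref{lem:qgap}(1): any $k$-clique in $G$ lifts to a multi-colored $k$-clique in $G'$, which in turn yields a clique of size $q^k$ in $H$. Condition (R2) is the contrapositive of Lemma~\ref{lem:qgap}(2): if $G$ has no $k$-clique then neither does $G'$, so $H$ admits no clique of size $q^{k-1} + 1$.

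For the running-time bound, I would simply invoke Lemma~\ref{lem:Htime}, which asserts that the graph $H$ is built in time polynomial in $|V(G')| + q^k = k\cdot |V(G)| + q^k$; since $k \le |V(G)|$ in any nontrivial instance, this is polynomial in $|V(G)| + q^k$. The preprocessing step from $G$ to $G'$ is polynomial in $k\cdot |V(G)|$, hence absorbed into the same bound.

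I do not anticipate a real obstacle. The completeness and soundness analyses are precisely the content of Lemma~\ref{lem:qgap}, whose nontrivial inputs were (i) the $8$-term linear independence of the Sidon-set representation of $V(G')$ from Theorem~\ref{thm:teSidon}, and (ii) the two combinatorial claims inside the soundness proof about the sets $T_r$. The only new ingredient here is the trivial color-partition reduction, so the proof reduces to assembling the pieces and verifying that ``multi-colored'' can be dropped from the input without cost.
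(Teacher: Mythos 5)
Your proposal is correct and matches the paper's approach: the theorem is indeed just a restatement of Lemma~\ref{lem:qgap} (completeness and soundness) and Lemma~\ref{lem:Htime} (running time) from Subsection~\ref{subsec:improve}, with the multi-colored assumption handled by a standard preprocessing step (which the paper folds into a ``without loss of generality'' remark in Section~\ref{sec:overview}, and you make explicit). One small wording nit: (R2) is not the \emph{contrapositive} of Lemma~\ref{lem:qgap}(2) but a direct application of it to $G'$, chained with the observation that $G$ having no $k$-clique implies $G'$ has none either.
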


    \section{Sub-polynomial Approximation Lower Bounds}
    \label{sec:lowerbound-proof}

    Equipped with Theorem~\ref{thm:gapreduction} we are ready to derive the lower bounds for the \FPT-appromxation of $k$-Clique.
    


    \medskip
    \begin{theorem}
        The $k$-clique problem has no $\FPT$-approximation with ratio $k^{o(1)}$,
        unless $\FPT= \W 1$.
    \end{theorem}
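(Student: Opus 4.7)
The plan is to apply Theorem~\ref{thm:gapreduction} once with $q$ chosen as a function of $k$ alone, turning a hypothetical $k^{o(1)}$-\FPT-approximation for $k$-Clique into an exact \FPT-algorithm for $k$-Clique, which contradicts $\W 1\ne \FPT$.

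Concretely, suppose for contradiction there is an \FPT-approximation $\mathbb A$ that, on an input graph $H$ with clique number $K$, outputs a clique of size at least $K/f(K)$ where $f(K)= K^{o(1)}$, and runs in time $t(K)\cdot |V(H)|^{O(1)}$. Given an instance $(G,k)$ of the multi-colored $k$-Clique problem, I first pick a prime power $q= q(k)$ depending only on $k$ and large enough that $f(q^k)< q$. Such a $q$ exists because, writing $f(K)\le K^{\epsilon(K)}$ with $\epsilon(K)\to 0$, the inequality $f(q^k)< q$ reduces to $\epsilon(q^k)< 1/k$, which holds once $q^k$ passes the threshold $K_0(k)$ at which $\epsilon$ drops below $1/k$; Bertrand's postulate ensures a prime power above any given bound. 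I then feed $(G,k,q)$ into Theorem~\ref{thm:gapreduction} to obtain the graph $H$ in time polynomial in $|V(G)|+ q^k$, an \FPT-blow-up in $k$ since $q$ depends only on $k$.

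Next I run $\mathbb A$ on $H$ with clique parameter $K:= q^k$ and inspect the size $s$ of the returned clique. By (R1), a $k$-clique in $G$ yields a clique of size $q^k$ in $H$, so $s\ge q^k/f(q^k)> q^{k-1}$; by (R2), the absence of a $k$-clique in $G$ forces the maximum clique size in $H$ to be at most $q^{k-1}$, so $s\le q^{k-1}$. Hence the predicate ``$s> q^{k-1}$'' correctly decides $k$-Clique on $G$, and the total running time is $t(q^k)\cdot (|V(G)|+ q^k)^{O(1)}$, which is \FPT\ in $k$. This produces an \FPT-algorithm for the $\W 1$-complete $k$-Clique problem, the desired contradiction.

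The only real subtlety, and the main obstacle, is the choice of $q(k)$: formally $q(k)$ must be computable for the reduction to be effective. This follows from the standing convention in \FPT-approximation that the ratio $f$ is computable, in which case $q(k)$ can be found by scanning prime powers $q$ in increasing order and stopping at the first one satisfying $f(q^k)< q$. One can also bypass this issue by enumerating candidate prime powers $q$ and halting as soon as the value $s$ returned by $\mathbb A$ falls unambiguously on one side of the threshold $q^{k-1}$, which is guaranteed for all sufficiently large $q$.
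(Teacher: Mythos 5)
Your argument is correct and follows the same strategy as the paper's proof: choose a prime power $q=q(k)$ large enough that the approximation ratio evaluated at $q^k$ is below $q$, apply Theorem~\ref{thm:gapreduction}, and threshold the size of the clique returned by $\mathbb A$ at $q^{k-1}$ to decide $k$-Clique. Your treatment of the computability of $q(k)$ via computability of the ratio function is the same fix the paper relegates to a footnote.

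One caveat on your final sentence. The alternative bypass --- enumerating prime powers $q$ and halting as soon as the value $s$ returned by $\mathbb A$ ``falls unambiguously on one side of the threshold $q^{k-1}$'' --- does not actually work. If $G$ has no $k$-clique, then by (R2) the algorithm returns $s\le q^{k-1}$ for \emph{every} $q$, and this outcome is indistinguishable from the situation where $G$ does have a $k$-clique but $q$ is not yet large enough to force $q^k/f(q^k)>q^{k-1}$. The scan therefore never receives a definitive negative signal and never halts on no-instances. The only genuine bypass is the one your main argument already uses: assume the ratio $f$ (equivalently, a modulus of convergence for the $o(1)$ exponent) is computable, so that a suitable $q(k)$ can be found by direct search.
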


    \begin{proof}
        Towards a contradiction, assume that $\mathbb A$ is an \FPT-approximation
        for the $k$-Clique problem such that on any input graph $G$ and $k\ge 1$:
        \begin{enumerate}
            \item[(A)] If $G$ has a $k$-clique, then $\mathbb A$ outputs a clique of
            size at least $k/k^{h(k)}$, where $h: \mathbb N\to \mathbb R$ with
            \[
                \lim_{k\to \infty} h(k)= 0.
            \]
        \end{enumerate}
        We will define a function $q:\mathbb N\to \mathbb N$ such that for any $k\ge 1$, $q\coloneqq q(k)$ with $q$ being a prime, and $k'\coloneqq q^k$ with
        \begin{equation}
            \label{eq:q}
            (k')^{h(k')}< q.
        \end{equation}
        This will give us the desired contradiction, since we would have an
        $\FPT$ algorithm deciding the $k$-Clique problem: on any input $(G, k)$, we
        first compute $q\coloneqq q(k)$.
        \footnote{There is a small but annoying issue on how
        to compute $q$, but this is often ignored. For most natural functions $h$, this is easy. A more rigorous treatment requires the ``little o'' in $k^{o(1)}$ to be interpreted ``effectively'' (see~\cite{CG07} for a detailed discussion). 
        } 
        Then apply the algorithm $\mathbb R$ in Theorem~\ref{thm:gapreduction} to get a graph $H\coloneqq H(G, k, q)$. Finally
        we run the approximation algorithm $\mathbb A$ on $(H, k')$ with $k'\coloneqq q^k$.
        \begin{itemize}
            \item If $G$ has a $k$-clique, then $H$ has a $k'$-clique by (R1).
            It
            follows by (A) and~\eqref{eq:q} that the algorithm $\mathbb A$ will output a clique
            of size at least
            \[
                \frac{k'}{(k')^{h(k')}}> \frac{q^k}{q}= q^{k-1},
            \]
            i.e., at least $q^{k-1}+1$.

            \item If $G$ has no $k$-clique, then $H$ has no clique of size $q^{k-1}+1$ by (R2).
            Thus, the clique in which the algorithm $\mathbb A$ outputs must have
            size at most
            \[
                q^{k-1}.
            \]
        \end{itemize}
        Therefore, we can decide whether the original graph $G$ has a $k$-clique by
        checking whether the clique that the algorithm $\mathbb A$ computes has size at least
        $q^{k-1}+1$.

        \medskip
        It remains to show that we can define a function $q:\mathbb N\to \mathbb N$ such that for $q\coloneqq q(k)$ the inequality~\eqref{eq:q} holds.
        Let $k\in \mathbb N$.
        Since $\lim_{x\to \infty} h(x)= 0$, there is an $n_k\in \mathbb N$ such that for all $n\ge n_k$ we have
        \[
            h(n)< \frac{1}{k}.
        \]
        Define
        \[
            q = q(k) \coloneqq \min \Big\{p\in \mathbb N \Bigmid
            \text{$p$ is a prime and $p^k\ge n_k$}\Big\}.
        \]
        Clearly, $q(k)$ is well defined.
        In particular, for $k'\coloneqq q^k$ we get $k'\ge
        n_k$.
        Hence, by our choice of $n_k$
        \[
            h(k')< \frac{1}{k}.
        \]
        It follows that
        \[
            (k')^{h(k')}< (k')^{1/k}= \left(q^k\right)^{1/k}= q. 
        \]
        This is precisely~\eqref{eq:q}.
    \end{proof}

    The next lower bound was first proved in~\cite{LinRSW22-Clique}.
    \begin{theorem}[Simplified proof of \cite{LinRSW22-Clique}]
    Assuming \ETH, there is no approximation algorithm for the $k$-Clique problem with a constant ratio of running time  $t(k)n^{o(\log k)}$ for any computable function $t$.
    \end{theorem}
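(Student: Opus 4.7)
The plan is to derive the result by composing Theorem~\ref{thm:gapreduction} (invoked with a \emph{constant} prime power $q$) with the well-known ETH-based lower bound stating that $k$-Clique admits no $f(k) n^{o(k)}$ algorithm for any computable $f$. Intuitively, a constant-ratio approximation algorithm running in time $t(k) n^{o(\log k)}$ would, after applying the gap reduction, yield an exact $k$-Clique algorithm in time $t'(k) n^{o(k)}$, contradicting ETH.

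Concretely, suppose for contradiction that $\mathbb A$ is a $c$-approximation algorithm for $k$-Clique running in time $t(k) n^{o(\log k)}$ for some constant $c\ge 1$ and computable $t$. I would fix a prime power $q$ with $q > c$; crucially, $q$ is an absolute constant independent of $k$. Given an input $(G, k)$ with $n = |V(G)|$, the plan is to apply Theorem~\ref{thm:gapreduction} to build $H = H(G, k, q)$ and set the new parameter $k' \coloneqq q^k$. By (R1) and (R2), if $G$ has a $k$-clique then $H$ has a clique of size $q^k = k'$, and otherwise every clique in $H$ has size at most $q^{k-1}$. Since $q > c$, running $\mathbb A$ on $(H, k')$ returns a clique of size $> k'/c > q^{k-1}$ in the completeness case and of size $\le q^{k-1}$ in the soundness case, so comparing the output against the threshold $q^{k-1}$ decides whether $G$ has a $k$-clique exactly.

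The remaining step is a routine running-time calculation. The reduction runs in time $\poly(n + q^k)$ and outputs a graph with $|V(H)| = O(q^{k+4} n^3)$. Invoking $\mathbb A$ on $(H, k')$ costs $t(k') \cdot |V(H)|^{o(\log k')}$. Since $q$ is constant, $\log k' = k \log q = \Theta(k)$, so $|V(H)|^{o(\log k')} = (q^{k+4} n^3)^{o(k)} = n^{o(k)} \cdot 2^{o(k^2)}$, while $t(k') = t(q^k)$ depends only on $k$. Absorbing all $k$-only factors into a single computable function $t'$ gives an exact $k$-Clique algorithm in time $t'(k) n^{o(k)}$, contradicting the standard ETH-based lower bound. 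I do not foresee any genuine technical obstacle: the heavy lifting is already done by Theorem~\ref{thm:gapreduction}, and the only care needed is to keep $q$ a constant strictly larger than $c$ so that the gap $q$ exceeds the approximation ratio and the exponent $o(\log k')$ translates cleanly to $o(k)$.
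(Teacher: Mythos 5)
Your proposal is correct and follows essentially the same route as the paper: fix a constant prime (power) $q$ exceeding the approximation ratio $c$, apply Theorem~\ref{thm:gapreduction} to get a gap $(q^k, q^{k-1})$-instance with new parameter $k' = q^k$, observe $o(\log k') = o(k)$ since $q$ is constant, and contradict the ETH-based $f(k)n^{o(k)}$ lower bound for exact $k$-Clique. The only cosmetic difference is bookkeeping (you argue via $q > c \Rightarrow k'/c > q^{k-1}$ while the paper verifies $q^{k-1}+1 \le k'/c$ under $q \ge c+1$, $k \ge 2$), and a small typo: a $c$-approximation yields a clique of size $\ge k'/c$, not $> k'/c$, though the chain $k'/c > q^{k-1}$ still closes the gap.
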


    \begin{proof}
    Let $c\ge 1$ and $\mathbb B$ be an algorithm that finds a clique of size $k/c$ if the input graph $G$ contains a clique of size $k$. Moreover, $\mathbb R$ runs in time $t(k)n^{o(\log k)}$, where
    $t$ is a computable function. We choose a (minimum) prime
    \[
    q\ge c+1.
    \]
 
    \medskip
    Now given a graph $G$ and $k\ge 2$. We first invoke the reduction $\mathbb R$ as stated in Theorem~\ref{thm:gapreduction} on $G$, $k$, and $q$, which produces a graph $H$
    satisfying (R1) and (R2). In particular, if $G$ has a $k$-clique, then $H$ has a clique of size
    \[
    k':= q^k.
    \]
    Otherwise, then $H$ has no clique of
    size
    \[
    q^{k-1}+1= \frac{q^k- q^{k-1}+q-1}{q-1}< \frac{q^k}{q-1}\le \frac{k'}{c},
    \]
    where the first inequality is by $k\ge 2$ and the second by $q\ge c+1$.
    Note $\mathbb R$ runs in time polynomial in
    \[
    |V(G)|+ q^k= n+ 2^{O(k)},
    \]
    which implies that
    \[
    |V(H)|\le 2^{O(k)}\poly(n).
    \]
    Next, we apply $\mathbb B$ on $H$ and $k'$. It follows that,
    \begin{itemize}
    \item {\bf Completeness.} If $G$ has a $k$-clique, then $H$ has a clique of size $k'= q^k$.
    Hence $\mathbb B$ output a clique of size at least $k'/c$.

    \item {\bf Soundness.} If $G$ does not have a $k$-clique, then $H$ has no clique of size $k'/c$.
    Thus, $\mathbb B$ cannot output a clique of size $k'/c$.
    \end{itemize}
    This means that we can decide whether the original graph $G$ has a clique of size $k$. Furthermore, observe that the running time of $\mathbb B$ is
    \[
    t(k') |V(H)|^{o(\log k')}\le t(q^k) \big(2^{O(k)}\poly(n)\big)^{o(k)}
    \le h(k) n^{o(k)}
    \]
    for some appropriate computable function $h$. This contradicts \ETH.
    \end{proof}

\section{Conclusion and Discussion}
\label{sec:conclusion}

We presented a self-reduction from $k$-Clique to $(q^k,q^{k-1})$-Gap Clique. Our reduction is simple and almost combinatorial. We simply combine the technique from the network coding theory and the use of Sidon sets. Both techniques are well-known in the literature, and no heavy machinery is involved. 
Moreover, our reduction is a self-reduction, which gives an insight into a generic gap-producing FPT-reduction for $\W 1$-hard problems.

In fact, our ideal goal is to devise a self-reduction that transforms an instance of $(k,k')$-Gap Clique into an instance of $(q^{k},q^{k'})$-Gap clique. If such a transformation exists, this will imply the following chain of reductions.

\begin{align*}
    \text{$(k,k-1)$-Gap Clique}
       \Rightarrow \text{$(q^k,q^{k-1})$-Gap Clique}
       \Rightarrow \text{$(\sigma^{q^{k}},\sigma^{q^{k-1}})$-Gap Clique}
\end{align*}

Setting $q=2$ and $\sigma=f(k)$ immediately rules out approximation ratio polynomial on $k$, say the gap of $K$ vs $K^{1/2}$. 
Since we can choose $\sigma$ to be an arbitrary function on $k$, for any computable non-decreasing function $f(K)$, we may choose $\sigma=\left(f^{-1}(K)\right)^{1/K}$ to rule out $f(K)$-approximation algorithm that runs in FPT-time. Roughly speaking, we may be just one step behind proving the total FPT-inapproximability of $k$-Clique under the $\W 1$-hardness.

\bibliographystyle{alpha}
\bibliography{ref}

\end{document}